\documentclass[aps,twocolumn,longbibliography,superscriptaddress]{revtex4-1}
\usepackage{amsmath,amssymb,mathrsfs}
\usepackage{natbib}
\usepackage{subfigure}
\usepackage{tabularx}
\usepackage{epsfig}
\usepackage{longtable}
\usepackage{amsfonts}
\usepackage{rotating}
\usepackage{subfigure}
\usepackage{amsmath}
\usepackage{comment}
\usepackage{bbold} 
\usepackage{multirow}
\usepackage{hhline}
\usepackage{color}
\usepackage{braket}
\usepackage{amsthm}

\def\be{\begin{equation}}
\def\ee{\end{equation}}
\def\bea{\begin{eqnarray}}
\def\eea{\end{eqnarray}}

\newtheorem{theorem}{Theorem}[section]

\newtheorem{lemma}[theorem]{Lemma}
\newtheorem{definition}{Def}
\usepackage{wordlike}

\usepackage[unicode=true,bookmarks=true,bookmarksnumbered=false,bookmarksopen=false,breaklinks=false,pdfborder={0 0 1},backref=false,colorlinks=true]{hyperref}

\hypersetup{linkcolor=magenta,urlcolor=blue,citecolor=blue,pdfstartview={FitH},hyperfootnotes=false,unicode=true}

\begin{document}

\title{Dynamical Characterization of Quantum Coherence} 
\author{Hai Wang}
\affiliation{School of Mathematics and Statistics, Nanjing University of Science and Technology, Nanjing 210094, Jiangsu, China}

\author{Ujjwal Sen}
\email{ujjwal@hri.res.in}
\affiliation{Harish-Chandra Research Institute, A CI of Homi Bhabha National Institute, Chhatnag Road, Jhunsi, Prayagraj 211019, India}

\begin{abstract}
Quantum coherence, rooted in the superposition principle of quantum mechanics, is a crucial quantum resource. Various measures, operational interpretations, and generalizations of quantum coherence have been proposed.  In recent years, its role in quantum dynamics and technologies has attracted much attention.  
%It's high time to apprehend coherence from quantum dynamics. 
%In this work,  we fully exhibit the effect of coherence in unitary evolutions. With our new concept, 
We introduce the concept of average quantum distance at a given time, and show that quantum coherence can be interpreted as  the average evolution speed for arbitrary time-independent Hamiltonians. We extend the considerations to 
%Further, through the famous Stinespring representation, we extend our average stratergy  to 
open quantum dynamics, where we find that quantum coherence can be used to bound the average evolution speed. Secondly, beyond this average setting,  we show how quantum coherence and the eigenvalues of the Hamiltonian together determine the instantaneous evolution speed in the general case, with the trade-off being the energy uncertainty. Finally, we use the strategy to analyze the charging and discharging in quantum batteries,  demonstrating the potential of our method in quantum technologies.  We believe that the results  clarify the role quantum coherence in quantum  dynamics, paving an alternate avenue, complementary to the existing ones, to understand the 
%task 
capacity
of quantum resources in dynamical processes.
\end{abstract}

\maketitle

\noindent {\it \textbf{Introduction.}---}
Quantum coherence is a fundamental feature of quantum mechanics, with implications ranging from the famous Schr{\"o}dinger's cat~\cite{cat} to  quantum technologies.
%~\cite{rmp_17,prl_19,np_2022}. 
It is currently being utilized in multiple informational and computational tasks. See e.g.~\cite{Devetak04,Varun15,ried_np,rmp_17,rist_17,Ma2019,prl_19,Francica2020, np_2022}.  It is therefore of crucial importance to understand how 
%Therefore, it has become a crucial issue how 
to effectively utilize quantum coherence and other quantum resources to obtain 
%highlight 
quantum advantages.
%~\cite{rist_17,ried_np}.  
The theory of quantum coherence has now been well-established as a typical example of a quantum resource theory, and the concept has been extended to situations beyond bases of orthogonal states and projections~\cite{Aberg_subspace,Baumgratz_cohere,resource_reversible,co_en_resource,resource_coherence,superposition_coherence,resource_RMP,Bischof_povm, Das_2020, PhysRevA.103.022417}.
%, On the other hand, since the rigorous criterion first stated in \cite{Baumgratz_cohere}, the theory of coherence has been well established as one typical example of quantum resource theory \cite{resource_reversible,co_en_resource,resource_coherence,resource_RMP}. And the concept of coherence has also been generalized to linearly independent states, projections, and POVM cases\cite{superposition_coherence,Aberg_subspace,Bischof_povm}. 

In addition,  the theory of quantum dynamical resources \cite{Gour_19,Hsieh_2021,Gao_22} has attracted increasing attention in recent years. Analyses have focused in particular on 
%In terms of coherence, 
how quantum coherence changes under dynamical processes or quantum channels~\cite{Gour_20,Martin_20,Xiong_21}. 
%has become a new trend for quantum coherence. 
However, until now, 
%there has been a lack of evidence showing 
the role played by quantum coherence and other quantum resources in quantum dynamical processes has been elusive.

In this work, we try to ascertain the role of quantum coherence in quantum dynamics. 
%To tackle this problem, in this work, we consider thoroughly the effects of coherence in quantum dynamics. 
We introduce the concept of 
%With our new concept, 
average quantum distance, and provide an explicit relation between average quantum distance and quantum coherence for quantum dynamics, which shows that quantum coherence is in essence the average evolution speed and vice versa. In essence, they turn out to be ``two sides of the same coin''. 

The structure of the presentation is as follows. After presenting 
%First, we present 
the motivation for this work, we provide 
%Secondly, 
a rigorous definition of the concept of  average quantum distance. Then, as our main result, we use this  concept to correlate quantum coherence with unitary evolutions in time-independent cases. A fresh measure of quantum 
coherence emerges. 
%mong them, one new coherence measure about projections will emerge. To keep the main text as a whole, the legitimacy of 
This quantum coherence measure and its properties are kept in the 
%will be provided in the 
%Supplementary Materials. 
Appendix.
Furthermore, using the Stinespring representation~\cite{Stinespring}, we show that our method can  be applied to open quantum dynamics. 
%ei obdi holo
We also show that it is possible to go beyond the average setting.
%to examine the role coherence playing in real quantum dynamics. 
As an 
%appetizer, 
example,
for the one-qubit time-independent case, we show that quantum coherence completely determines the dynamics. For general time-dependent dynamics, we establish a relation among the instantaneous evolution speed, the  eigenvalues of the Hamiltonian, and quantum coherence. Our results 
imply 
that the energy uncertainty of states can be regarded as the trade-off between the states' quantum coherence and the Hamiltonians' eigenvalues. The magnitude of the instantaneous evolution speed turns out to be  the energy uncertainty. Finally, we  use our method to analyze  the quantum battery model.
%, which emphasizes the importance of basis in the charging/discharging operator.
%
%Furthermore,we generalize this stratergy to open dynamics, where coherence can bound the average evolution speed on average. And for general time-dependent cases, assisted with one reasonable continuity assumption, we explicitly show how coherence and Hamiltonians' eigenvalues together determine the instantaneous evolution speed in our framework. The trade-off between the state's coherence and Hamiltonians' eigenvalues is just the state's energy uncertainty. Finally, we use our strategy to analyze the charging/discharging part in the quantum battery model.
\\\\
\noindent {\it \textbf{Motivations.}---}
Generally speaking, there are two methods to quantify speed. One is to preset some distance to see how long it takes from the beginning to the end. The other is to preset some time to see how far it will go. In terms of quantum evolution, research on QSLs (Quantum Speed Limits) \cite{MLBound1945,MT-1990,Margolus1998,Luo_prl_03,Deffner_2013,Campo_prl13,Hegerfeldt_prl13,Deffner_prl13,Taddei_13,Marvian_15,Pires_prx_16,Deffner_2017,Campaioli_prl18,Shanahan_prl18,Shiraishi_prl18,Sun_prl19,Sun_prl21,Yu_22,Campo_prx22,Yadin_prl24} can be grouped into the first category. Recall the statement of the famous Mandelstam-Tamm bound and Margolus-Levitin bound \cite{MLBound1945,MT-1990,Margolus1998,Deffner_2013,Deffner_2017,Hörnedal_2022}
\begin{equation}
    T\geq \{\frac{\hbar}{\Delta H}\mathcal{L}(\ket{\psi_0},\ket{\psi_1}),\frac{\hbar}{\braket{H}}\mathcal{L}(\ket{\psi_0},\ket{\psi_1})\},
\label{eq:QSL}
\end{equation}where $\braket{H}=\bra{\psi_0}H\ket{\psi_0}$ and $\Delta H=(\braket{H^2}-\braket{H})^{\frac{1}{2}}$ are the energy expectation and standard variance of $\ket{\psi_0}$. $\mathcal{L}(\rho_0,\rho_1)=\arccos{\sqrt{F(\rho_0,\rho_1)}}$ is the Bures distance~\cite{Bures_0,Bures_1969} between $\rho_0$ and $\rho_1$, where $F(\rho_0,\rho_1)$ is the quantum fidelity~\cite{nielsen_chuang_2010,fidelity_mix}. From the aforementioned perspective, QSLs set the lower bound on how long it takes a quantum system to go from one initial state $\rho_0$ to another quantum state $\rho_1$ such that $D(\rho_0,\rho_1)=c$, where $D(\cdot ,\cdot)$ is some distance measure on quantum state space and $c$ is a presetted constant. 

Then it is quite natural to ask what we will gain if we justify quantum speed based on the second criterion. That is, given a distance measure $D(\cdot ,\cdot)$ on the state space, for one Hamiltonian $H$ and any instant $t>0$, let us use 
\begin{equation}
   D(\rho,\rho_t) 
\label{origin}
\end{equation}
to describe the evolution speed of arbitrary quantum state $\rho$, where $\rho_t=e^{-iHt}\rho e^{iHt}$. And if so, what's the relationship between states $\rho$ satisfying
\begin{equation}
    D(\rho,\rho_t)=\max_{\sigma} D(\sigma,\sigma_t)
\label{eq:idea}
\end{equation} with the system's Hamiltonian $H$? Besides, for different instants $t_1$ and $t_2$, are $\{\rho |D(\rho,\rho_{t_1})=\max_{\sigma} D(\sigma,\sigma_{t_1}) \}$ and $\{\rho |D(\rho,\rho_{t_2})=\max_{\sigma} D(\sigma,\sigma_{t_2}) \}$ the same set?

For one Hamiltonian $H$, whose spectral decomposition is $H=\sum_i \lambda_i \ket{\mu_i}\bra{\mu_i}$, it is obvious that the set
\begin{equation}
    \{\rho | D(\rho,\rho_t)=\min_{\sigma} D(\sigma,\sigma_t), \forall t\}
\label{eq:max}
\end{equation} is known, which is just the set of all incoherent states in the orthonormal basis $\{\ket{\mu_i}\}_i$. This observation hints that there should be relations correlating quantum speed with quantum coherence. However, consider the following single qubit Hamiltonian 
\begin{equation*}
    H=\lambda_0 \ket{0}\bra{0}+\lambda_1 \ket{1}\bra{1}, \lambda_0 <\lambda_1.
\end{equation*} For $\ket{\psi_0}=\frac{1}{\sqrt{2}}(\ket{0}+\ket{1})$, in this evolution, the first time $\ket{\psi_t}$ is orthogonal to $\ket{\psi_0}$ is
\begin{equation}
   T=\frac{\pi}{\lambda_1-\lambda_0}.
\label{eq:example_1}
\end{equation}Obviously, for this Hamiltonian $H$, the evolution speed will not only be dependent on states' coherence in the energy basis but also on $H$'s specific eigenvalues.  This example shows that focusing only on one specific Hamiltonian to get the relationship between evolution speed and coherence may not be a good idea. To soundly establish one connection between quantum speed and quantum coherence, we have to give modifications based on Eq.\eqref{origin}. 

Given one orthonormal basis $\{\ket{\mu_i}\}_i$, let us consider the following set
\begin{equation}
    \Lambda\doteq\{H=\sum_i \lambda_i \ket{\mu_i}\bra{\mu_i}| \lambda_i\neq \lambda_j, \forall i\neq j\},
    \label{whole_set}
\end{equation} the set of all non-degenerate Hamiltonians whose eigenstates are $\{\ket{\mu_i}\}_i$. For any instant $t>0$, if we could find one way to quantify the following quantity
\begin{equation}
    \frac{1}{|\Lambda|}\sum_{H\in \Lambda}D(\rho,e^{-iHt}\rho e^{iHt}),
    \label{infty}
\end{equation}then this quantity will be independent of specific values of one Hamiltonian's eigenvalues, so it should possess nice relations with $\rho$'s coherence under $\{\ket{\mu_i}\}_i$.
This is where average quantum distance emerges. It is a computable modification based on Eq.\eqref{infty}. With this new concept, we can say that  on average, the quantum speed is actually quantum coherence. And as a byproduct, for the states which possess the largest average quantum speed, we can say that $\rho$ has the largest average quantum distance, if and only if $\rho$ is maximally coherent in the eigenstates of $H$.
\\\\
\noindent {\it \textbf{Mathematical Tools.}---}
In the following, the distance measure between quantum states we use is the quantum Heillinger distance \cite{Luo_distance}
\begin{equation*}
    D(\rho,\sigma)=Tr{(\sqrt{\rho}-\sqrt{\sigma})^2}=\|\sqrt{\rho}-\sqrt{\sigma}\|_2 ^2,
\end{equation*}which is the quantum version of the classical Hellinger distance \cite{Hellinger_origin,LOURENZUTTI20144414}.  And when $\ket{\psi}$ is pure, $\rho_\psi$ will be used to present the density operator $\ket{\psi}\bra{\psi}$. For coherence about some orthogonal decomposition $\{P_i\}_{i=0}^{M-1}$ of one Hilbert space, we use the following quantity
\begin{equation}
    C_{\frac{1}{2}}(\rho)=1-\sum_{0\leq i\leq M-1}Tr[(P_i \sqrt{\rho}P_i)^2]
    \label{generalized_C}
\end{equation}to quantify $\rho $'s coherence under  $\{P_i\}_{i=0}^{M-1}$. When $\{P_i\}_{i=0}^{M-1}$ corresponds to some orthonormal basis $\{\ket{\mu_i}\}_{i=0}^{d-1}$, the quantity in Eq.\eqref{generalized_C} will be reduced to the $1/2$-affinity of coherence given by \cite{Xiong_Pra}, which is deeply connected with least square measurements \cite{Spehner2014,Spehner2017}. In the Appendix, following the framework in \cite{Yu_cohere}, we will verify that this quantity satisfies all requirements to be a good coherence measure about subspaces and enumerate its properties.
\\\\
\noindent {\it \textbf{Average Quantum Distance.}---} 
Note that the set $\Lambda$ given by Eq.\eqref{whole_set} can also be expressed as the union of different set $\Lambda=\bigcup_{\Vec{\lambda}}\Lambda_{\Vec{\lambda}}$, where 
\begin{equation}
    \Lambda_{\Vec{\lambda}}=\{H_s =\sum_{i=0}^{d-1}\lambda_i \ket{\mu_{s(i)}}\bra{\mu_{s(i)}}|s\in S_d\}
    \label{orbit}.
\end{equation}In the above, $S_d$ is the permutation group on $\{0,1,\ldots,d-1\}$ and the index $\Vec{\lambda}$ is one arbitrary $d$-dimensional real vector satisfying $\lambda_0 <\lambda_1<\ldots<\lambda_{d-1}$. Then it is obvious that different $ \Lambda_{\Vec{\lambda}}$ have no intersection and every $ \Lambda_{\Vec{\lambda}}$ is one finite set consisted of $d!$ elements. Thus, we can focus on every such set to see which state runs fastest on average.  

\begin{definition}
    Given one non-degenerate Hamiltonian $H=\sum_{i=0}^{d-1}\lambda_i \ket{i}\bra{i}$, the average quantum distance of $\rho$ at the instant $t$ is
    \begin{equation}
     \bar{S}_t (\rho)\doteq \frac{1}{d!}\sum_{s\in S_d} D(\rho, e^{-iH_st}\rho e^{iH_st} ),
     \label{distance_2}
    \end{equation}where $S_d$ is the permutation group on $\{0,\ldots,d-1\}$ and $H_s =\sum_{i=0}^{d-1}\lambda_i \ket{s(i)}\bra{s(i)}$ for $s\in S_d$.
\end{definition} And when $H$ is degenerate, it can always be assumed that its diagonalization is 
\begin{equation}
    H=\sum_{i=0}^{M-1}\lambda_i P_i,
    \label{diagonalization}
\end{equation}where $\sum_i P_i=I$ and $\lambda_0<\lambda_1<\ldots<\lambda_{M-1}$ without losing generality. The previous definition can be generalized into degenerate Hamiltonians in the following way
\begin{definition}
    Given one Hamiltonian $H=\sum_{i=0}^{M-1}\lambda_i P_i$ with $\lambda_0<\lambda_1<\ldots<\lambda_{M-1}$, the average quantum distance of $\rho$ at $t$ is
    \begin{equation}
     \bar{S}_t (\rho)\doteq \frac{1}{M!}\sum_{s\in S_M} D(\rho, e^{-iH_st}\rho e^{iH_st} ),
     \label{def_avg_distance}
    \end{equation}where $S_M$ is the permutation group on $\{0,\ldots,M-1\}$ and $H_s =\sum_{i=0}^{M-1}\lambda_i P_{s(i)}$ for $s\in S_M$.
\end{definition} In particular, if $H$ is non-degenerate, this definition will be reduced to the non-degenerate case.

%Compare Eq.\eqref{origin} with Eq.\eqref{distance_2} (Eq.\eqref{def_avg_distance}), for any instant $t>0$, now we will not only consider how far away $\rho$ can go along the single evolution path induced by $H$. Instead, we add the distances $\rho$ going along different evolution paths induced by $H_s$ with $s\in S_d (S_M)$ over time $t$ and take the average. That's why we call it \textbf{average quantum distance}. We average distances over different evolution paths with the time being the same. %
In the next part, we will see that average quantum distance, one dynamical quantity, is quantum coherence in essence.
\\\\
\noindent {\it \textbf{Quantum Coherence is Average Quantum Distance.}---}
In this part, the relation between $C_{\frac{1}{2}}(\cdot)$ and $\bar{S}_t (\cdot)$ will be revealed. To keep the story simple, we will focus on non-degenerate Hamiltonians to demonstrate our ideas with relevant proofs for degenerate Hamiltonians given in the Appendix.

\begin{theorem}
     For a non-degenerate Hamiltonian $H=\sum_{i=0}^{d-1}\lambda_i \ket{\mu_i}\bra{\mu_i}$, given a quantum state $\rho$, its average quantum distance $\bar{S}_t (\rho)$ in arbitrary $t>0$ has the following relationship with the coherence of $\rho$ under $H$'s eigenstates $\{\ket{\mu_i}\}_i$ 
     \begin{equation}
       \bar{S}_t(\rho)=2[1-A(t)]C_{\frac{1}{2}}(\rho),  
     \label{main_resultA} 
     \end{equation} where $A(t)\leq 1$ is one quantity independent of $\rho$.
     \label{pure}
\end{theorem}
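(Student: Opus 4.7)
The plan is to reduce the definition of $\bar{S}_t(\rho)$ to a trace involving $\sqrt{\rho}$ in the energy basis, and then exploit the combinatorial symmetry of the average over $S_d$ so that the state dependence collapses onto the off-diagonal weight of $\sqrt{\rho}$, which is precisely $C_{1/2}(\rho)$. Throughout I will write $R=\sqrt{\rho}$ with matrix elements $R_{ij}=\bra{\mu_i}\sqrt{\rho}\ket{\mu_j}$ in the eigenbasis of $H$.

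First, since $H_s$ is Hermitian, $e^{-iH_s t}\rho e^{iH_s t}$ is unitarily conjugate to $\rho$, so $\sqrt{e^{-iH_s t}\rho e^{iH_s t}}=e^{-iH_s t}\sqrt{\rho}\,e^{iH_s t}$. Expanding the Hellinger distance gives $D(\rho,e^{-iH_s t}\rho e^{iH_s t})=2-2\mathrm{Tr}\bigl(\sqrt{\rho}\,e^{-iH_s t}\sqrt{\rho}\,e^{iH_s t}\bigr)$. Since $H_s\ket{\mu_k}=\lambda_{s^{-1}(k)}\ket{\mu_k}$, inserting the basis yields
\begin{equation}
\mathrm{Tr}\bigl(\sqrt{\rho}\,e^{-iH_s t}\sqrt{\rho}\,e^{iH_s t}\bigr)=\sum_{i,j}|R_{ij}|^{2}\,e^{i(\lambda_{s^{-1}(i)}-\lambda_{s^{-1}(j)})t}.
\end{equation}

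Next I would perform the average over $s\in S_d$. The diagonal terms $i=j$ contribute $\sum_i R_{ii}^{\,2}$ for every $s$. For each ordered pair $(i,j)$ with $i\neq j$, as $s$ ranges over $S_d$ the pair $(s^{-1}(i),s^{-1}(j))$ visits each ordered pair $(k,l)$ with $k\neq l$ exactly $(d-2)!$ times, so
\begin{equation}
\tfrac{1}{d!}\sum_{s\in S_d} e^{i(\lambda_{s^{-1}(i)}-\lambda_{s^{-1}(j)})t}=\tfrac{1}{d(d-1)}\sum_{k\neq l}e^{i(\lambda_k-\lambda_l)t},
\end{equation}
which is independent of $(i,j)$. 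Defining $A(t)\doteq\frac{1}{d(d-1)}\sum_{k\neq l}e^{i(\lambda_k-\lambda_l)t}$ and noting $\sum_{i\neq j}|R_{ij}|^{2}=\mathrm{Tr}(\rho)-\sum_i R_{ii}^{\,2}=1-\sum_i R_{ii}^{\,2}$, I then identify $\sum_i R_{ii}^{\,2}=\sum_i \bra{\mu_i}\sqrt\rho\ket{\mu_i}^{2}=1-C_{1/2}(\rho)$ from the definition in Eq.~\eqref{generalized_C}. Plugging back gives
\begin{equation}
\bar{S}_t(\rho)=2-2\bigl[(1-C_{1/2}(\rho))+A(t)\,C_{1/2}(\rho)\bigr]=2[1-A(t)]\,C_{1/2}(\rho),
\end{equation}
which is the claimed formula.

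Finally, I would verify $A(t)\leq 1$ using the elementary identity $\sum_{k\neq l}e^{i(\lambda_k-\lambda_l)t}=\bigl|\sum_{k}e^{i\lambda_k t}\bigr|^{2}-d\leq d^{2}-d=d(d-1)$, so $A(t)\in\mathbb{R}$ and $A(t)\leq 1$ with equality only at the instants where the phases align. No step is genuinely hard: the only subtlety is bookkeeping the permutation average correctly (the point that $(d-2)!$ permutations realize each off-diagonal pair), and recognizing that the resulting state-dependent prefactor is exactly the off-diagonal mass of $\sqrt{\rho}$, i.e.\ $C_{1/2}(\rho)$. This identification is what makes the coherence measure $C_{1/2}$, rather than any other choice, the correct one in this dynamical setting.
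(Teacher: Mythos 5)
Your proposal is correct and follows essentially the same route as the paper's own proof: expand the Hellinger distance via the affinity $\mathrm{Tr}(\sqrt{\rho}\,e^{-iH_st}\sqrt{\rho}\,e^{iH_st})$ in the energy eigenbasis, use the $(d-2)!$ counting to average the phase factors over $S_d$, and identify the residual off-diagonal weight of $\sqrt{\rho}$ with $C_{1/2}(\rho)$. The only (cosmetic) differences are your use of $s^{-1}$ indexing, the complex-exponential form of $A(t)$ in place of the paper's cosine sum (they coincide), and your cleaner bound $A(t)\le 1$ via $\bigl|\sum_k e^{i\lambda_k t}\bigr|^2\le d^2$.
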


Detailed proof will be provided in the Appendix. This result definitely shows that states' coherence is actually average quantum distance and vice versa, which can also be regarded as one dynamical interpretation of quantum coherence. In addition, the coefficient $A(t)$ in Eq.\eqref{main_resultA} also has its physical meaning. In the Appendix, we will see that under the evolution given by $H=\sum_j \lambda_j\ket{j}\bra{j}$, 
\begin{equation}
    A(t)=\frac{d}{d-1}( |\braket{\phi_0 | \phi_t}|^2-\frac{1}{d}),
    \label{benchmark}
\end{equation}where $\ket{\phi_0}=\frac{1}{\sqrt{d}}\sum_j e^{i\theta_j}\ket{\mu_j}$ is an arbitrary maximally coherent state in this energy basis. 
So $A(t)$ can be regarded as some benchmark for the time evolution $e^{-iHt}$ using maximally coherent states. And because $H$ is non-degenerate, it is obvious that $A(t)\leq 1$ and there is no period $(t_1, t_2)$ such that $A(t)=1$ for $t\in (t_1, t_2)$. So the set $ \{t' | A(t')=1\}_{t>0}$ is at most composed of countable discrete instants. Above all, about $A(t)$, we can say that $A(t)<1$ in almost all cases.

What's more,  because $C_{\frac{1}{2}}(\rho)\leq \frac{2}{d-1}C_{l_1}(\rho)$, so based on Eq.\eqref{main_resultA}, we will have $ \bar{S}_t(\rho)\leq \frac{4[1-A(t)]}{d-1}C_{l_1}(\rho)$, which shows that the average quantum distance can be bounded by the coherence measure $C_{l_1}(\rho)$\cite{Baumgratz_cohere}.

In addition, for degenerate Hamiltonians, with the coherence measure in Eq.\eqref{generalized_C}, we can get the following result, whose detailed proof will be given in the Appendix.
\begin{theorem}
     For a degenerate Hamiltonian $H=\sum_{i=0}^{M-1}\lambda_i P_i$, given a quantum state $\rho$, its average quantum distance $\bar{S}_t (\rho)$ in arbitrary $t>0$ has the following relationship with the coherence of $\rho$ in $H$' s eigenspaces $\{P_i\}_{i=0}^{M-1}$ 
     \begin{equation}
       \bar{S}_t(\rho)=2[1-B(t)]C_{\frac{1}{2}}(\rho),  
     \label{main_result2B} 
     \end{equation} where $B(t)\leq 1$ is one quantity independent of $\rho$ and $C_{\frac{1}{2}}(\rho)=1-\max\{[Tr(\sqrt{\rho}\sqrt{\sigma})]^2| \sigma=\sum_{i=0}^{M-1}P_i \sigma P_i\}$.
     \label{mixed}
\end{theorem}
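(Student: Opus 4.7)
The plan is to generalize the non-degenerate argument of Theorem 1 by systematically replacing rank-one projectors $\ket{\mu_i}\bra{\mu_i}$ with the eigenspace projectors $P_i$, relying on the fact that the symmetrization in Eq.~\eqref{def_avg_distance} is over labels (permutations of eigenvalues among eigenspaces) rather than over the operators themselves, so the same combinatorial identities still go through.

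First I would use $\sqrt{e^{-iH_s t}\rho e^{iH_s t}}=e^{-iH_s t}\sqrt{\rho}\,e^{iH_s t}$ to write each summand of Eq.~\eqref{def_avg_distance} as $D(\rho,e^{-iH_s t}\rho e^{iH_s t})=2-2\,\mathrm{Tr}(\sqrt{\rho}\,e^{-iH_s t}\sqrt{\rho}\,e^{iH_s t})$. Substituting $e^{-iH_s t}=\sum_i e^{-i\lambda_i t}P_{s(i)}$ expands the trace as $\sum_{i,j}e^{-i(\lambda_i-\lambda_j)t}T_{s(i),s(j)}$, with $T_{ab}\doteq\mathrm{Tr}(\sqrt{\rho}P_a\sqrt{\rho}P_b)$. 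Averaging over $s\in S_M$: for $i=j$, each $a\in\{0,\ldots,M-1\}$ is hit $(M-1)!$ times, giving $\frac{1}{M}\sum_a T_{aa}$; for $i\neq j$, each ordered pair $(a,b)$ with $a\neq b$ is hit $(M-2)!$ times, giving $\frac{1}{M(M-1)}\sum_{a\neq b}T_{ab}$. Using $\sum_a P_a=I$, one gets $\sum_{a,b}T_{ab}=\mathrm{Tr}(\rho)=1$, hence $\sum_{a\neq b}T_{ab}=1-\sum_a T_{aa}=C_{\frac{1}{2}}(\rho)$ by Eq.~\eqref{generalized_C}. Collecting the two contributions yields
\begin{equation*}
\bar{S}_t(\rho)=2\,C_{\frac{1}{2}}(\rho)\left[1-\frac{|\sum_i e^{-i\lambda_i t}|^2-M}{M(M-1)}\right],
\end{equation*}
so the $\rho$-independent coefficient is $B(t)=\bigl(|\sum_i e^{-i\lambda_i t}|^2-M\bigr)/[M(M-1)]$, and $B(t)\leq 1$ follows from the trivial bound $|\sum_i e^{-i\lambda_i t}|^2\leq M^2$.

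It remains to reconcile the $C_{\frac{1}{2}}$ in Eq.~\eqref{generalized_C} with the fidelity-type formula in the theorem statement. This is a short standalone lemma: any $\sigma=\sum_i P_i\sigma P_i$ has $\sqrt{\sigma}$ block-diagonal in the $\{P_i\}$, so $\mathrm{Tr}(\sqrt{\rho}\sqrt{\sigma})=\mathrm{Tr}\bigl((\sum_i P_i\sqrt{\rho}P_i)\sqrt{\sigma}\bigr)$. Applying Cauchy--Schwarz and using $\mathrm{Tr}\bigl((\sum_i P_i\sqrt{\rho}P_i)^2\bigr)=\sum_i\mathrm{Tr}((P_i\sqrt{\rho}P_i)^2)$ (orthogonality of the $P_i$), with equality attained at $\sigma\propto(\sum_i P_i\sqrt{\rho}P_i)^2$, one obtains $\max_\sigma[\mathrm{Tr}(\sqrt{\rho}\sqrt{\sigma})]^2=\sum_i\mathrm{Tr}((P_i\sqrt{\rho}P_i)^2)=1-C_{\frac{1}{2}}(\rho)$.

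The main potential obstacle is purely bookkeeping: tracking that the permutation counting is insensitive to the possibly differing ranks of the $P_i$ (it is, since the averaging acts only on eigenvalue labels), and verifying the equivalence-of-two-definitions lemma for $C_{\frac{1}{2}}$ in the projector setting rather than in the rank-one basis setting. Once those points are in place, the rest of the argument is a direct translation of the non-degenerate proof of Theorem~\ref{pure}.
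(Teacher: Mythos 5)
Your proposal is correct and follows essentially the same route as the paper's proof: expand $D$ via the affinity $\mathrm{Tr}(\sqrt{\rho}\sqrt{\rho_t(s)})$, use $e^{-iH_st}=\sum_i e^{-i\lambda_i t}P_{s(i)}$, and average over $S_M$ with the $(M-2)!$ counting argument, which yields exactly the paper's $B(t)=\frac{2}{M(M-1)}\sum_{m<n}\cos(\lambda_m-\lambda_n)t$ (your $|\sum_i e^{-i\lambda_i t}|^2$ form is the same quantity). Your closing Cauchy--Schwarz lemma identifying $1-\sum_i\mathrm{Tr}((P_i\sqrt{\rho}P_i)^2)$ with the fidelity-type expression is likewise the paper's separate appendix lemma.
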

In addition, $B(t)$ in Eq.\eqref{main_result2B} is quite similar to $A(t)$ in Eq.\eqref{main_resultA}. In the Appendix, we will see that for degenerate Hamiltonians, once $H$ is nontrivial, $B(t)<1$ in almost all cases. 

Until now, by the concept average quantum distance, we can say that for every $\Lambda_{\Vec{\lambda}}$ defined in Eq.\eqref{orbit}, more coherent $\rho$ is, faster it will run on average. And this conclusion can be directly lifsted to the set $\Lambda$ defined in Eq.\eqref{whole_set}, which shows that quantum coherence is just the evolution distance averaged on $\Lambda$.

Furthermore, this average strategy can be applied to open quantum dynamics to bound evolution behaviors. Based on the famous data processing inequality \cite{Pitrik_2020, DPI_1,DPI_2,DPI_3}, this result can be stated as
\begin{theorem}
    Consider one quantum channel $\Phi$, given its Stinespring representation, $\Phi(\rho)=Tr_E(U\rho\otimes\ket{0}\bra{0}U^{\dagger})$, we will have the following bound for the average evolution distance in terms of $\Phi$,
    \begin{equation}  \frac{1}{M!}\sum_{s\in S_M} D(\Phi_s (\rho),\rho)\leq 2[1-B(t)]C_{\frac{1}{2}}(\rho\otimes \ket{0}\bra{0})
    \end{equation}, where $H=\sum_{i=0}^{M-1}\lambda_i P_i$ is the Hamilotian on the whole system satisfying $U=e^{-iHT}$ and $\Phi_s$ is defined as $\Phi_s(\rho)=Tr_E (e^{-iH_s T}\rho\otimes\ket{0}\bra{0}e^{iH_s T})$.
    \label{result_channel}
\end{theorem}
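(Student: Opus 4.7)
The plan is to lift the identity of Theorem~\ref{mixed}, applied to the dilated system-plus-environment state, down to the reduced dynamics through the data processing inequality (DPI) for the quantum Hellinger distance.

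First I would regard $\rho\otimes\ket{0}\bra{0}$ as a state on the composite Hilbert space $\mathcal{H}_S\otimes\mathcal{H}_E$ and treat the global Hamiltonian $H=\sum_{i=0}^{M-1}\lambda_i P_i$ (with $U=e^{-iHT}$) exactly as the Hamiltonian appearing in Theorem~\ref{mixed}. Writing $U_s=e^{-iH_sT}$ for $s\in S_M$, a direct application of Theorem~\ref{mixed} at the instant $t=T$ gives the exact identity
\begin{equation*}
\frac{1}{M!}\sum_{s\in S_M} D\bigl(\rho\otimes\ket{0}\bra{0},\,U_s\,\rho\otimes\ket{0}\bra{0}\,U_s^{\dagger}\bigr)=2[1-B(T)]C_{\frac{1}{2}}(\rho\otimes\ket{0}\bra{0}).
\end{equation*}

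Next I would invoke monotonicity of the quantum Hellinger distance under any CPTP map. Since $\rho=Tr_E(\rho\otimes\ket{0}\bra{0})$ and $\Phi_s(\rho)=Tr_E\bigl(U_s\,\rho\otimes\ket{0}\bra{0}\,U_s^{\dagger}\bigr)$, and the partial trace $Tr_E$ is CPTP, for every $s\in S_M$ we obtain
\begin{equation*}
D(\Phi_s(\rho),\rho)\leq D\bigl(U_s\,\rho\otimes\ket{0}\bra{0}\,U_s^{\dagger},\,\rho\otimes\ket{0}\bra{0}\bigr).
\end{equation*}
Averaging this term-wise bound over $s\in S_M$ and substituting the identity above produces the desired inequality.

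The main obstacle is the DPI step itself: monotonicity for the specific distance $D(\rho,\sigma)=Tr(\sqrt{\rho}-\sqrt{\sigma})^2$ is not the familiar DPI for trace distance or relative entropy, and it must be extracted from the $f$-divergence representation of the Hellinger distance associated with the operator-convex function $f(x)=(1-\sqrt{x})^2$. This is precisely what the DPI references cited alongside the statement supply, so it can be used as a black-box lemma. Once that lemma is secured, no additional structural work is required; the remainder is a direct combination of Theorem~\ref{mixed} with an averaged partial-trace contraction.
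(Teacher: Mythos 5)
Your proposal matches the paper's own proof essentially step for step: apply Theorem~\ref{mixed} to the dilated state $\rho\otimes\ket{0}\bra{0}$ with the global Hamiltonian $H$, then contract each term via the data processing inequality for the quantum Hellinger distance under the partial trace $Tr_E$, and average over $s\in S_M$. The paper likewise treats the Hellinger DPI as a cited black-box result, so there is no substantive difference in approach.
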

Details will be provided in the Appendix. When $\Phi$ is unitary, the inequality stated in Thm.\eqref{result_channel} can be one equation, which is just the result in Thm.\eqref{pure}
and Thm.\eqref{mixed}. Then it is worth considering whether there will be any nonunitary channel $\Phi$ such that the inequality in Thm.\eqref{result_channel} can be satisfied. We provide elementary analysis  in the Appendix, which shows the difficulty to find such nonunitary channel.

In the following, we will go beyond the average setting and examine how coherence influences real quantum dynamics, especially dynamics driven by time-dependent Hamiltonians.
%Now, we will go beyond the average setting and examine how coherence influences real quantum dynamics, especially dynamics driven by time-dependent Hamiltonians.
\\\\
\noindent {\it \textbf{Appetizer: The Simplest Setting.}---}
In the qubit system, for time-independent Hamiltonians, non-degenerate Hamiltonians are the only nontrivial Hamiltonians. Without losing generality, we can assume $H$ has the following diagonal decomposition
$$H=\lambda \ket{0}\bra{0}+\gamma\ket{1}\bra{1},\lambda\neq \gamma.$$
Then arbitrary pure state can be expressed as $\ket{\mu}=\alpha\ket{0}+\beta\ket{1}$. If $\ket{\mu_t}=e^{-iHt}\ket{\mu}$, we can get $D(\rho_\mu,\rho_{\mu_t})=2|\alpha|^2|\beta|^2\cdot(1-\cos{(\lambda-\gamma)t})$.

With $C_{\frac{1}{2}}(\ket{\mu})=C_{\frac{1}{2}}(\ket{\mu_t})=2|\alpha|^2|\beta|^2$, this distance can be transformed into 
\begin{equation}
D(\rho_\mu,\rho_{\mu_t})=2C_{\frac{1}{2}}(\ket{\mu})\cdot(1-\cos{(\lambda-\gamma)t}).
\label{True}  
\end{equation}
Comparing Eq.~\eqref{True} with Eq.~\eqref{main_resultA}, we see that in this case, the average quantum distance is actually the real distance between the initial state $\ket{\mu}$ and the evolutionary state $\ket{\mu_t}$. In addition, from Eq.~\eqref{True}, in this qubit model, we can conclude that   when $H$ is time independent, $\ket{\mu}$'s coherence about $H$ totally determines its evolution speed. 

{\it \textbf{General Cases.}---}
Now let us consider the general setting. Given one $d$-dimensional Hilbert space $\mathcal{H}$, consider one time-dependent dynamics $H_t=\sum_i \lambda_i(t)\ket{i_t}\bra{i_t}$. Here, non-degeneracy of $H_t$ is assumed to keep the representation simple. In the Appendix, methods for degenerate cases will be presented, which is quite similar to the non-degenerate case here. Suppose that the initial state is $\ket{\psi_0}$, and it evolves to the state $\ket{\psi_t}$ at instant $t$. Without losing generality, we can assume that under the basis $\{\ket{i_t}\}_i$, $\ket{\psi_t}=\sum_i \alpha_i(t)\ket{i_t}$.

Then for the tiny shift 
\begin{equation}
     \lim_{\delta t\rightarrow 0} D(\rho_{\psi_t},\rho_{\psi_{t+\delta t}}),
     \label{speed}
\end{equation}
if we assume $\ket{\psi_{t+\delta t}}\approx e^{-iH_t\delta t}\ket{\psi_t}$ for $\delta t<<1$, which is quite a mild continuity condition, we will see that $D(\rho_{\psi_t},\rho_{\psi_{t+\delta t}})$ equals
\begin{equation*}
    4\sum_{p\neq q}|\alpha_p (t)|^2 |\alpha_q (t)|^2 \sin^2{[(\frac{\lambda_p (t)-\lambda_q (t)}{2})\delta t]}.
\end{equation*}Meanwhile, as $\delta t<<1$, the following approximation can be established
\begin{equation*}
     D(\rho_{\psi_t},\rho_{\psi_{t+\delta t}})\approx\delta^2 t\sum_{p\neq q}|\alpha_p (t)|^2 |\alpha_q (t)|^2 (\lambda_p (t)-\lambda_q (t))^2.
\end{equation*}Marked $A_t=((\lambda_p (t)-\lambda_q (t))^2)_{p,q}$ and $\Vec{r}_{\psi_t}=(|\alpha_0 (t)|^2,\ldots,|\alpha_{d-1} (t)|^2)^T$, which is the coherence vector of $\rho_{\psi_t}$ under $\{\ket{i_t}\}_i$ \cite{co_vector}, the previous approximation will be transformed into $ D(\rho_{\psi_t},\rho_{\psi_{t+\delta t}})\approx \delta^2 t(\Vec{r}_{\psi_t}, A_t \Vec{r}_{\psi_t})$.

So about the tiny shift in Eq.\eqref{speed}, the following result can be established
\begin{equation}
    \lim_{\delta t\rightarrow 0} D(\rho_{\psi_t},\rho_{\psi_{t+\delta t}})=d^2 t (\Vec{r}_{\psi_t}, A_t \Vec{r}_{\psi_t}).
    \label{tiny_d2}
\end{equation}As $D(\rho_{\psi_t},\rho_{\psi_{t+\delta t}})=||\rho_{\psi_t}-\rho_{\psi_{t+\delta t}}||_2 ^2$, by setting the magnitude of the instantaneous evolution speed to be $v(t)\doteq\lim_{\delta t\rightarrow 0} \frac{||\rho_{\psi_t}-\rho_{\psi_{t+\delta t}}||_2}{\delta t}$, we will see that
\begin{equation}
    v(t)=\sqrt{(\Vec{r}_{\psi_t}, A_t \Vec{r}_{\psi_t})},
    \label{speed2}
\end{equation} which clearly shows that the instantaneous speed is co-determined by the state's coherence vector $\Vec{r}_{\psi_t}$ and the matrix $A_t$ dependent on $H_t$'s eigenvalues. On the other hand, the coherence measurer $C_{\frac{1}{2}}(\ket{\psi_t})=1-\sum_i |\alpha_i (t)|^4$ actually quantifies the impurity of the coherence vector $\Vec{r}_{\psi_t}$. So to make $v(t)$ biggest, the ideal case is when the length of $\Vec{r}_{\psi_t}$ is the biggest and $\Vec{r}_{\psi_t}$ is the $A_t$'s eigenvector corresponding to the biggest eigenvalue. Note that when $H_t$ is non-degenerate, $A_t$ will be one non-trivial symmetric real matrix with $Tr{(A_t)}=0$, so its biggest eigenvalue must be positive. However, the length of $\Vec{r}_{\psi_t}$ being the biggest is equivalent to $\ket{\psi_t}$ being one eigenstate of $H_t$, which will result in $(\Vec{r}_{\psi_t}, A_t \Vec{r}_{\psi_t})=0$. Thus, these two ideal conditions cannot coexist.

And the trade-off presented by Eq.\eqref{speed2} is the energy uncertainty in essence. This is because $ (\Vec{r}_{\psi_t}, A_t \Vec{r}_{\psi_t})=2\braket{H^2_t}_{\psi_t}-2\braket{H_t}^2_{\psi_t}$.
%\begin{align*}
%    & (\Vec{r}_{\psi_t}, A_t \Vec{r}_{\psi_t})=\sum_{p,q}|\alpha_p (t)|^2|\alpha_q (t)|^2(\lambda_p(t)-\lambda_q (t))^2\\
%    &=\sum_q |\alpha_q (t)|^2\sum_p |\alpha_p (t)|^2\lambda^2_p(t)+\sum_p |\alpha_p (t)|^2\sum_q |\alpha_q (t)|^2\lambda^2_q (t)\\
%    &-2\sum_{p,q}\lambda_p|\alpha_p (t)|^2 \lambda_q|\alpha_q (t)|^2=2\braket{H^2_t}_{\psi_t}-2\braket{H_t}^2_{\psi_t}.
%\end{align*}
So combined with Eq.\eqref{speed2}, we will get $v(t)=\sqrt{2}\Delta H_t(\ket{\psi_t})$,
%\begin{equation}
%    v(t)=\sqrt{2}\Delta H_t(\ket{\psi_t}),
%    \label{speed3}
%\end{equation}
where $\Delta H_t(\ket{\psi_t})=\sqrt{\braket{H^2_t}_{\psi_t}-\braket{H_t}^2_{\psi_t}}$ is just the energy uncertainty of $\ket{\psi_t}$. 
\\\\
\noindent {\it \textbf{Applications.}---}
In this part, we apply our average strategy to the quantum battery setting. Today, the quantum battery \cite{ergotropy,Binder_2015,Korzekwa_2016,campioli_prl17,Ferraro_prl18,Gian_prb18,Gian_prl19,Gian_prb19,Bera_prr20,Gian_prl20,Fran_prl20,Cara_prr20,Seah_prl21,Ding_22,Salvia_prr23,Shao_prl23} has become an important branch in quantum thermodynamics. In this field, ergotropy \cite{ergotropy} is one crucial quantity, which describes the amount of energy that can be extracted from a given quantum battery state by means of cyclic modulations of the battery’s Hamiltonian or unitary evolutions in another words.  %In the typical scenario, for one discharging process from $t=0$ to $t=\tau$, the related discharge Hamiltonian can be expressed as 
%\begin{equation*}
%    H(t)= H_0 +V(t)
%\end{equation*}\cite{Fran_prl20,Cara_prr20,Ding_22,Shao_prl23}, where the time-independent $H_0$ characterizes the battery part and the time-dependent term $V(t)$ describes the discharging protocol which satisfies $V(0)=0=V(\tau)$. Previous researches have revealed the relation between quantum coherence and ergotropy through the original states' coherence in the energy-basis of $H_0$\cite{Fran_prl20,Fran_pre17,Jiang_prl24}. In the following, we will apply our average strategy to show the influence of the evolutionary state' coherence in the eigenstates of $V(t)$ for the discharging process.

Consider the following single qubit quantum battery
\begin{equation}
    H(t)=\epsilon \ket{1}\bra{1}+\eta(t)V(t),
    \label{single_battery}
\end{equation}where $\epsilon>0$, $\eta(t)$ is a smooth function satisfying $\eta(0)=\eta(\tau)=0$ and $\eta(t)>0$ for $t\in (0,\tau)$, and $V(t)$ represents the time-dependent spin operator. Suppose the intial state is $\rho_0 =\ket{\psi_0}\bra{\psi_0}$. By appropriately defining the average extracted work $\bar{W}_t$, in the Appendix, we prove that 
\begin{equation}
|\bar{W}_t|\leq 2\epsilon\cdot \sin{(\eta(t)\Delta t)}\cdot \sqrt{C_{\frac{1}{2},V(t)}(\rho_t)},
\end{equation}where $C_{\frac{1}{2},V(t)}(\rho_t)$ is the coherence of $\rho_t$ in the eigenstates of $V(t)$. So in terms of this average extracted work, it is better to choose those $V(t)$ that make $\rho_t$ more coherent if eigenvalues of the discharging part are fixed. And in the Appendix, we also show that this observation can be extended to general qudit quantum battery cases.
%To show the influence of  the discharging part $\eta(t)V(t)$, we can consider the average extracted work from $t$ to $t+\Delta t$. That is, we can average $\eta(t)V(t)$ for the process from $s=t$ to $s=t+\Delta t$. As $V(t)$ is the time-dependent spin operator, so the nontrivial permutation of $\eta(t)V(t)$ is $-\eta(t)V(t)$. In other words, consider the following quantity for $\Delta t<<1$,
%\begin{equation*}
%  \bar{W}_t=\frac{1}{2}\{Tr{[\epsilon \ket{1}\bra{1}((\rho_t-\rho_{t+\Delta t,1})+(\rho_t-\rho_{t+\Delta t,2}))]}\},
%    \label{average_work}
%\end{equation*}where $\rho_{t+\Delta t,1}=U_1(\Delta t)\rho_t U_1 ^\dagger(\Delta t)$ and $\rho_{t+\Delta t,2}=U_2(\Delta t) \rho_t U_2 ^\dagger(\Delta t)$ with $U_1(\Delta t)=\exp{(-i\int_t ^{t+\Delta t}H(s)ds)}$ and $U_2(\Delta t)=\exp{(-i\int_t ^{t+\Delta t}H'(s)ds)}$. And for $H'(t)$, $H'(t)=\epsilon \ket{1}\bra{1}-\eta(t)V(t)$.  That is, $\bar{W}_t$ is the average extracted work from $t$ to $t+\Delta t$ by permutating eigenstates of $V(t)$. In the Appendix, using the interaction picture, it will be proved that for $\Delta t<<1$,
\\\\
\noindent {\it \textbf{Conclusion.}---}
In this work, we carefully inspected the role that quantum coherence plays in quantum dynamics. By considering a complementary aspect of QSLs, on average, we derived 
%nice 
connections between quantum coherence and evolution speed, first for closed systems, and then 
%which can be further 
generalized to open quantum dynamics. Besides, we exhibited the effect of quantum coherence in general time-dependent evolutions. Furthermore, we applied our average method to quantum batteries, which shows that the average efficiency for this quantum device can be upper-bounded by quantum coherence. We believe that this average view will open a new way to correlate quantum dynamics with quantum resource theories, and will provide new insights for aspects of quantum information,  such as quantum speed limits and quantum control.  

%For time-independent Hamiltonians, by our new concept, the average quantum distance, we show that this dynamical concept is actually coherence, for both non-degenerate Hamiltonians and degenerate Hamiltonians. And our strategy can be generalized to open quantum dyanmics. For general time-dependent Hamiltonians, we go beyond the average setting and attempt to connect quantum coherence with true dynamics, from the simplest model, one single qubit system to general quantum systems. We show that the instantaneous speed can be completely determined by the trade-off between states' coherence and the arrangement of Hamiltonians' eigenvalues, which is just the energy uncertainty. Furthermore, by using this average strategy into the ergotropy of quantum batteries, we show that the average efficiency can also be upper-bounded by quantum coherence in a simple model. We believe this average view will open a new way to correlate quantum dynamics with quantum resource theory and will provide new insights for quantum information theory such as quantum speed limits, quantum control and so on.  

\noindent {\it \textbf{Acknowledgement.}}
%\paragraph*{\textbf{Acknowledgement.}} 
This work is supported by the National
Natural Science Foundation of China (Grant No. 1240010163).

\bibliography{references}

\newpage

\begin{widetext} 
%\newpage 
\renewcommand{\theequation}{S\arabic{equation}}
\renewcommand{\thesection}{S-\arabic{section}}
\renewcommand{\thefigure}{S\arabic{figure}}
\renewcommand{\thetable}{S\arabic{table}}
\setcounter{equation}{0}
\setcounter{figure}{0}
\setcounter{table}{0}

\newpage 

\begin{center}
    \Huge Appendix
\end{center}

In this Appendix, to be complete and readable, we will complement proofs missing in the main text and review conditions for measures about coherence for orthonormal basis, propose the order preserving condition, and generalize the coherence measure, $C_{1/2}(\cdot)$ into projection cases.  In addition, one detailed analysis about this generalized $C_{1/2}(\cdot)$ is also provided.

{\it \textbf{Proof of Thm.\eqref{pure}}---}
     For one non-degenerate Hamiltonian $H=\sum_{i=0}^{d-1}\lambda_i \ket{\mu_i}\bra{\mu_i}$, given one quantum state $\rho$, its average quantum distance $\bar{S}_t (\rho)$ at some $t>0$ has the following relationship with the coherence of $\rho$ under $H$'s eigenstates $\{\ket{\mu_i}\}_i$ 
     \begin{equation}
       \bar{S}_t(\rho)=2[1-A(t)]C_{\frac{1}{2}}(\rho),  
     \label{main_result} 
     \end{equation} where $A(t)\leq 1$ is one quantity independent of $\rho$.

\begin{proof}
For one non-degenerate Hamiltonian $H=\sum_{i=0}^{d-1}\lambda_i \ket{\mu_i}\bra{\mu_i}$, arbitrary $H_s (s\in S_d)$ can be formulated as 
\begin{equation}
    H_s=\sum_j \lambda_j \ket{\mu_{s(j)}}\bra{\mu_{s(j)}},
\end{equation} where  $s$ is some permutation in $S_d$. This results in $e^{-iH_s t}=\sum_j e^{-i\lambda_j t}\ket{\mu_{s(j)}}\bra{\mu_{s(j)}}$.

On the other hand, for two states $\rho$ and $\sigma$, we have $ D(\rho,\sigma)=2[1-Tr{\sqrt{\rho}\sqrt{\sigma}}]$,
where $Tr{\sqrt{\rho}\sqrt{\sigma}}$ is the quantum affinity \cite{Luo_distance}. So based on the definition of $\bar{S}_t(\rho)$, we have
\begin{equation}
    \bar{S}_t(\rho)=2[1-\frac{1}{d!}\sum_{s\in S_d} Tr{\sqrt{\rho}\sqrt{\rho_t(s)}}],
    \label{S_t}
\end{equation}where $\rho_t(s)=e^{-iH_st}\rho e^{iH_st}$ for convenience.  From Eq.\eqref{S_t}, the key issue is $\frac{1}{d!}\sum_{s\in S_d} Tr{\sqrt{\rho}\sqrt{\rho_t(s)}}$, which will be analyzed detailed in the following.

For one arbitrary state $\rho$, we can always assume that the matrix expression of its square-root $\sqrt{\rho}$ in $\{\ket{\mu_i}\}$ as 
\begin{equation}
    \sqrt{\rho}=\sum_{i,j}\beta_{ij}\ket{i}\bra{j},
\end{equation} where the matrix $(\beta_{ij})_{i,j}$ is semi-positive satisfying
\begin{equation*}
    \beta_{ii}\geq 0,\  \sum_{i,j}|\beta_{ij}|^2=Tr{(\rho)}=1.
\end{equation*}
For the quantity $Tr{\sqrt{\rho}\sqrt{\rho_t(s)}}$, by directly computation, 
\begin{align*}
   & Tr{(\sqrt{\rho}\sqrt{\rho_t(s)})}=Tr{(\sqrt{\rho}e^{-iH_s t}\sqrt{\rho}e^{iH_s t})}\\
   &=Tr{(\sum_{i,j}\beta_{ij}\ket{i}\bra{j}\sum_{m,n}\beta_{s(m)s(n)}e^{-i(\lambda_m-\lambda_n)t}\ket{s(m)}\bra{s(n)})}\\
   &=\sum_{m,n}|\beta_{s(m)s(n)}|^2\cdot e^{-i(\lambda_m-\lambda_n)t}\\
   &=(\sum_j \beta_{jj}^2)+2\sum_{m<n}|\beta_{s(m)s(n)}|^2\cos{(\lambda_m-\lambda_n)t}.
\end{align*}

Now it's time to average $Tr{\sqrt{\rho}\sqrt{\rho_t(s)}}$ over the whole permutation group $S_d$, where we will find that the quantity $\frac{1}{d!}\sum_{s\in S_d} Tr{(\sqrt{\rho}\sqrt{\rho_t(s)})}$ actually equals to 
\begin{equation}
    (\sum_j \beta_{jj}^2)+\frac{2}{d!}\sum_{s\in S_d}\sum_{m<n}|\beta_{s(m)s(n)}|^2\cos{(\lambda_m-\lambda_n)t}.
    \label{Avg_S}
\end{equation}
   For the second item on Eq.\eqref{Avg_S}, we can change the order of summation, which becomes
\begin{equation}
    \frac{2}{d!}\sum_{m<n}(\sum_{s\in S_d}|\beta_{s(m)s(n)}|^2)\cos{(\lambda_m-\lambda_n)t}.
    \label{Order_change}
\end{equation}
However, note that for any pair $(m,n)$ and $(k,l)$ with $0\leq m<n\leq d-1$ and $0\leq k\neq l\leq d-1$, there must be permutation $s\in S_d$ such that
\begin{equation*}
    s(m)=k, s(n)=l
\end{equation*} and there are actually $(d-2)!$ permutations satisfying these constraints. So for any pair $(m,n)$ with $0\leq m<n\leq d-1$, 
\begin{equation}
    \sum_{s\in S_d}|\beta_{s(m)s(n)}|^2=(d-2)!\sum_{k\neq l}|\beta_{kl}|^2.
    \label{Sum}
\end{equation}
Now, combining Eq.\eqref{Avg_S}, Eq.\eqref{Order_change} with Eq.\eqref{Sum}, for $\frac{1}{d!}\sum_{s\in S_d} Tr{(\sqrt{\rho}\sqrt{\rho_t(s)})}$, we find it equals to 
\begin{align*}
    (\sum_j \beta_{jj}^2)&\cdot(1-\frac{2}{d(d-1)}\sum_{m<n}\cos{(\lambda_m -\lambda_n)t})\\
                         &+\frac{2}{d(d-1)}\sum_{m<n}\cos{(\lambda_m -\lambda_n)t}.
\end{align*}
    
  In the above, the following fact is used
\begin{equation*}
    \sum_{k\neq l}|\beta_{kl}|^2=\sum_{k, l}|\beta_{kl}|^2-\sum_j \beta_{jj}^2=1-\sum_j \beta_{jj}^2.
\end{equation*}
Let $A(t)$ be 
\begin{equation}
   A(t)\doteq \frac{2}{d(d-1)}\sum_{m<n}\cos{(\lambda_m -\lambda_n)t}.
   \label{A_t}
\end{equation}
 we see that $A(t)$ is one continuous function about $t$ and \textbf{independent of $\rho$.} Above all, we have 
\begin{equation}
     \frac{1}{d!}\sum_{s\in S_d} Tr{(\sqrt{\rho}\sqrt{\rho_t(s)})}=(\sum_j \beta_{jj}^2)[1-A(t)]+A(t).
\end{equation} Back to Eq.\eqref{S_t}, $\bar{S}_t(\rho)$ can be re-expressed as
\begin{equation}
    \bar{S}_t(\rho)=2[(1-A(t))(1-\sum_j \beta_{jj}^2)].
\end{equation}However, $1-\sum_j \beta_{jj}^2$ is just $C_{\frac{1}{2}}(\rho)$ with each $P_i =\ket{i}\bra{i}$ 
Finally, we arrive the following equation
\begin{equation*}
     \bar{S}_t(\rho)=2[1-A(t)]C_{\frac{1}{2}}(\rho).    
\end{equation*}
\end{proof}
{\it \textbf{Requirements for coherence measures to be satisfied}---}
First, let us review the criterion for coherence measures about orthonormal basis. Suppose $\mathcal{H}$ is a finite-dimensional Hilbert space with some orthonormal basis $\{\ket{i}\}_{i=0}^{d-1}$. We can define those diagonal states as incoherent states, which is labeled as $\mathcal{I}$,
\begin{equation*}
    \mathcal{I}=\{\sigma | \sigma=\sum_{i=0}^{d-1}\lambda_i \ket{i}\bra{i}\}.
\end{equation*}And based on this, we can define free operations. In terms of coherence, free operations are completely positive and trace-preserving mappings, which admit incoherent Kraus operation representation\cite{Baumgratz_cohere}. That is, if $\Phi$ is some coherence free operation, then there are some operators $\{K_j\}$,such that $\Phi(\rho)=\sum_j K_j \rho K_j^{\dagger}$ for arbitrary $\rho$ and for arbitrary $\sigma\in \mathcal{I}$ and $j$,
\begin{equation*}
    \frac{K_j \sigma K_j^{\dagger}}{Tr{(K_j \sigma K_j^{\dagger})}}\in \mathcal{I}.
\end{equation*}

In \cite{Baumgratz_cohere}, they show that one reasonable coherence measure $C(\cdot)$ should satisfy:

$(1)$ Faithfulness. $C(\rho)\geq 0$ and $C(\rho)=0$ if and only if $\rho\in \mathcal{I}$.

$(2)$ Monotonicity. $C(\Phi(\rho))\leq C(\rho)$ for arbitrary free operation $\Phi$.

$(3)$ Strong monotonicity. $\sum_i p_i C(\sigma_i)\leq C(\rho)$ with $p_i =Tr{(K_j \rho K_j^{\dagger})}$ and $\sigma_i =p_i^{-1} K_i \rho K_i^{\dagger}$. 

$(4)$ Convexity. $\sum_i p_i C(\rho_i)\geq C(\sum_i p_i \rho_i)$ for arbitrary $\rho_i$ and $p_i\geq 0$ satisfying $\sum_i p_i=1$.

Until now, these conditions have become the cornerstone of quantum resource theory\cite{resource_coherence,resource_RMP}. In \cite{Yu_cohere}, authors propose an alternative framework to verify coherence measures. They propose the condition $(3')$ called  the additivity of
coherence for block-diagonal states,
\begin{equation*}
      C(p_0 \rho_0 \oplus p_1\rho_1)=p_0 C(\rho_0)+p_1 C(\rho_1),
\end{equation*}to replace the previous conditions $(3)$ and $(4)$, where equivalence between these two criterion is also proved.

Similarly as before, given one orthogonal decomposition $\{P_i\}_{i=0}^{M-1}$ of $H$ instead of some orthonormal basis, we can define the following states as incoherent states,
\begin{equation}
    \mathcal{I}=\{\rho | \rho=\sum_m P_m \rho P_m\}.
    \label{incoherent_subspace}
\end{equation}With this, free operations are natural. Given $\{P_i\}_{i=0}^{M-1}$, we say one quantum channel $\Phi$ is free, if it admits one Kraus operator representation $\Phi(\rho)=\sum_i K_i \rho K_i^{\dagger}$ such that for every $i$,
\begin{equation*}
   \frac{K_i \sigma K_i^{\dagger}}{Tr{(K_i \sigma K_i^{\dagger})}}\in \mathcal{I},\forall \sigma\in \mathcal{I}. 
\end{equation*} However, although coherence has been generalized into projections and POVMs \cite{Aberg_subspace,Bischof_povm}, where free states and free operations are introduced, their criterion for coherence measure in these settings is somewhat intuitive. Compared with criterion introduced by \cite{Baumgratz_cohere}, they only consider faithfulness, monotonicity and convexity as their conditions. Here, to be complete, we will follow \cite{Baumgratz_cohere,Yu_cohere} to rigours define conditions that measures about projections' coherence should satisfy:

$(1)$ Faithfulness. $C(\rho)\geq 0$ and $C(\rho)=0$ if and only if $\rho\in \mathcal{I}$.

$(2)$ Monotonicity. $C(\Phi(\rho))\leq C(\rho)$ for arbitrary free operation $\Phi$.

$(3)$ Strong monotonicity. $\sum_i p_i C(\sigma_i)\leq C(\rho)$ with $p_i =Tr{(K_i \rho K_i^{\dagger})}$ and $\sigma_i =p_i^{-1} K_i \rho K_i^{\dagger}$. 

$(4)$ Convexity. $\sum_i p_i C(\rho_i)\geq C(\sum_i p_i \rho_i)$ for arbitrary $\rho_i$ and $p_i\geq 0$ satisfying $\sum_i p_i=1$.

And the additivity of coherence for block-diagonal states proposed in \cite{Yu_cohere} can also be translated directly into projection cases, that is, $ C(p\rho\oplus(1-p)\sigma)=pC(\rho)+(1-p)C(\sigma)$.  \textbf{However, there is one fundamental difference between orthogonal decompositions and orthonormal basis of Hilbert spaces.} There is one partial order structure on the set of all orthogonal decompositions on one Hilbert space, which does not appear on the set of all orthonormal basis of one Hilbert space. To be concrete, for two orthogonal decompositions $\mathrm{P}=\{P_m\}_{m=0}^{M-1}$ and $\mathrm{Q}=\{Q_n\}_{n=0}^{N-1}$ of $\mathcal{H}$, $\mathrm{Q}\succeq \mathrm{P}$ if for every $0\leq n\leq N-1$, there is one $0\leq m\leq M-1$ such that $Q_n \leq P_m$ \textbf{and} for every $0\leq m\leq M-1$, there is one subset $\Lambda_m\subseteq \{0,\ldots,N-1\}$ such that $P_m=\sum_{n\in \Lambda_m}Q_n$. So for two orthogonal decompositions $\mathrm{P}=\{P_m\}_{m=0}^{M-1}$ and $\mathrm{Q}=\{Q_n\}_{n=0}^{N-1}$,  intuitively $\mathrm{Q}\succeq \mathrm{P}$ means that $\mathrm{Q}=\{Q_n\}_{n=0}^{N-1}$ is one refinement of $\mathrm{P}=\{P_m\}_{m=0}^{M-1}$.  In terms of coherence about projections, the choice of orthogonal decompositions decides how to describe states and the whole Hilbert space, just like the reference frame in the classic physics. If $\mathrm{Q}\succeq \mathrm{P}$, then intuitively, we can say that it is more careful to use $\mathrm{Q}$ to describe states than $\mathrm{P}$. And it is quite natural to demand that we should not get less if we observe more carefully. In terms of coherence about projections, it is equivalent to demand that \textbf{for two orthogonal decompositions $\mathrm{P}=\{P_m\}_{m=0}^{M-1}$ and $\mathrm{Q}=\{Q_n\}_{n=0}^{N-1}$ of one Hilbert space, if $\mathrm{Q}\succeq \mathrm{P}$, then $C_{\mathrm{Q}}(\rho)\geq C_{\mathrm{P}}(\rho)$ ,} which we call \textbf{order preserving condition}. 

So in terms of coherence about projections, one reasonable coherence measure should satisfy not only those conditions similar to \cite{Baumgratz_cohere,Yu_cohere}, but also the order preserving condition, which is specific to projections' coherence.

{\it \textbf{$C_{\frac{1}{2}}(\cdot)$ for projections}---}
Suppose $\mathcal{H}$ is one $d$-dimensional Hilbert space with one orthogonal decomposition $\mathrm{P}=\{P_i\}_{i=0}^{M-1}$ with $M\leq d$. The measure $C_{\frac{1}{2}}(\cdot)$ for coherence about $\{P_i\}_{i=0}^{M-1}$ is defined as 
\begin{definition}
    Given one orthogonal decomposition $\mathrm{P}=\{P_i\}_{i=0}^{M-1}$ of $\mathcal{H}$, for one state $\rho$, its coherence about these projections is 
    \begin{equation}
        C_{\frac{1}{2}}(\rho)=1-\sum_{m=0}^{M-1}Tr{[(P_m\sqrt{\rho}P_m)^2]}.
        \label{coherence_sub}
    \end{equation}
\end{definition}
When $\mathrm{P}=\{P_i\}_{i=0}^{M-1}$ corresponds to some orthonormal basis $P_i=\ket{\mu_i}\bra{\mu_i},\forall i$, then our $C_{\frac{1}{2}}(\rho)$ will degenerate to the $\frac{1}{2}$ affinity of coherence defined in \cite{Xiong_Pra},
\begin{equation*}
    C_{\frac{1}{2}}(\rho)=1-\sum_i \bra{\mu_i}\sqrt{\rho}\ket{\mu_i}^{2},
\end{equation*} which is deeply related with state discrimination tasks using least square measurements \cite{Spehner2014,Spehner2017}. In the following, we will verify that $C_{\frac{1}{2}}(\cdot)$ is one reasonable measure for projections' coherence as what is done in \cite{Yu_cohere}. 

Firstly, about $C_{\frac{1}{2}}(\cdot)$, we have
\begin{lemma}
For states $\rho$, 
   \begin{equation}
     C_{\frac{1}{2}}(\rho)=\min\{d_a^{\frac{1}{2}}(\rho,\sigma)|\sigma=\sum_{m=0}^{M-1}P_m\sigma P_m\},
     \label{equiv_min}
   \end{equation}where $d_a^{\frac{1}{2}}(\rho,\sigma)=1-[Tr{(\sqrt{\rho}\sqrt{\sigma})}]^2$ is the $\frac{1}{2}$ affinity of distance \cite{Xiong_Pra}. And the minimum can be reached if and only if  
   \begin{equation}
    \sigma_{\rho}=\oplus_m \frac{Tr{[(P_m \sqrt{\rho}P_m)^2]}}{\sum_n Tr{[(P_n \sqrt{\rho}P_n)^2]}}\frac{(P_m \sqrt{\rho}P_m)^2}{Tr{[(P_m \sqrt{\rho}P_m)^2]}}.
    \label{min_condition}
\end{equation}
\end{lemma}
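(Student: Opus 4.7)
The plan is to reduce the minimization of $d_a^{1/2}(\rho,\sigma)$ over incoherent $\sigma$ to the maximization of $[\text{Tr}(\sqrt{\rho}\sqrt{\sigma})]^2$, and to bound the latter by two successive applications of the Cauchy--Schwarz inequality, with the equality characterization coming from tracking both saturation conditions.

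First, I would exploit that any $\sigma$ satisfying $\sigma=\sum_m P_m\sigma P_m$ is block-diagonal with respect to $\{P_m\}_{m=0}^{M-1}$, and hence so is its unique positive square root: $\sqrt{\sigma}=\bigoplus_m \sqrt{\sigma_m}$, where $\sigma_m:=P_m\sigma P_m$ and $\sqrt{\sigma_m}$ is supported in $P_m\mathcal{H}$. Using cyclicity of the trace together with $P_m^2=P_m$,
$$\text{Tr}\!\left(\sqrt{\rho}\sqrt{\sigma}\right)=\sum_m \text{Tr}\!\left(P_m\sqrt{\rho}P_m\,\sqrt{\sigma_m}\right),$$
and each term is non-negative because $P_m\sqrt{\rho}P_m\succeq 0$ as a compression of a positive operator.

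Next, I apply the Hilbert--Schmidt Cauchy--Schwarz inequality blockwise,
$$\text{Tr}\!\left(P_m\sqrt{\rho}P_m\,\sqrt{\sigma_m}\right)\le \sqrt{q_m\,p_m},$$
where $q_m:=\text{Tr}[(P_m\sqrt{\rho}P_m)^2]$ and $p_m:=\text{Tr}(\sigma_m)$, and then apply Cauchy--Schwarz again to $(\sqrt{q_m})$ and $(\sqrt{p_m})$, using $\sum_m p_m=1$, to obtain
$$\text{Tr}\!\left(\sqrt{\rho}\sqrt{\sigma}\right)\le \sum_m\sqrt{q_m p_m}\le \sqrt{\textstyle\sum_m q_m}.$$
Squaring and subtracting from one gives $d_a^{1/2}(\rho,\sigma)\ge 1-\sum_m q_m = C_{\frac{1}{2}}(\rho)$, the desired lower bound.

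For the equality part, I would track the saturation conditions of both Cauchy--Schwarz steps. The first forces $\sqrt{\sigma_m}=\lambda_m(P_m\sqrt{\rho}P_m)$ for some $\lambda_m\ge 0$, so $\sigma_m=\lambda_m^2(P_m\sqrt{\rho}P_m)^2$ and $p_m=\lambda_m^2 q_m$. The second forces $p_m\propto q_m$, which combined with $\sum_m p_m=1$ pins down $p_m=q_m/\sum_n q_n$. Substituting these back reproduces exactly the $\sigma_\rho$ in Eq.~\eqref{min_condition}, and a direct computation then verifies that this choice attains the bound. The main obstacle I expect is not computational but bookkeeping around degeneracies: when $q_m=0$ for some $m$ the corresponding $\sigma_m$ is unconstrained on that subspace, so uniqueness of $\sigma_\rho$ must be qualified to the support of the optimum. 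One also needs to clearly justify the initial block-diagonal square-root decomposition, a standard but essential technicality that validates the very first step.
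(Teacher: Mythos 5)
Your proposal is correct and follows essentially the same route as the paper: reduce to maximizing $[\mathrm{Tr}(\sqrt{\rho}\sqrt{\sigma})]^2$ over block-diagonal $\sigma$, apply Cauchy--Schwarz blockwise to bound each $\mathrm{Tr}(P_m\sqrt{\rho}P_m\sqrt{\sigma_m})$, and then apply Cauchy--Schwarz (which the paper labels H\"older) once more over the blocks, tracking both saturation conditions to identify $\sigma_\rho$; the only cosmetic difference is that you keep $\sigma_m$ unnormalized while the paper factors out the weights $p_m$ explicitly. Your remark about the $q_m=0$ blocks is a legitimate refinement of the ``if and only if'' claim that the paper's own proof does not address.
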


\begin{proof}
    Because $d_a^{\frac{1}{2}}(\rho,\sigma)=1-[Tr{(\sqrt{\rho}\sqrt{\sigma})}]^2$, so the rightside of Eq.\eqref{equiv_min} is equivalent to 
    \begin{equation}
        1-\max_{\sigma\in \mathcal{I}} [Tr{(\sqrt{\rho}\sqrt{\sigma})}]^2,
    \end{equation}where $\mathcal{I}=\{\sigma | \sigma=\sum_{m=0}^{M-1}P_m\sigma P_m\}$.

Note that for $\sigma\in \mathcal{I}$, $\sigma$ must be of the form
\begin{equation}
    \sigma=\oplus_m p_m\sigma_m,
    \label{decomposition}
\end{equation}where $\overrightarrow{p}$ is some probability distribution and $\sigma_m$ is some state on the subspace $P_m \mathcal{H}P_m$ for $0\leq m\leq M-1$. Take Eq.\eqref{decomposition} into $Tr{(\sqrt{\rho}\sqrt{\sigma})}$, we have
\begin{equation}
    Tr{(\sqrt{\rho}\sqrt{\sigma})}=\sum_m \sqrt{p_m}Tr{[(P_m \sqrt{\rho}P_m)^{\dagger}\sqrt{\sigma_m}]}.
\end{equation}As $|Tr{(A^{\dagger}B)}|\leq [Tr{(A^{\dagger}A)}]^{\frac{1}{2}}[Tr{(B^{\dagger}B)}]^{\frac{1}{2}}$ with equality established if and only if $B=\alpha A$. For $0\leq m\leq M-1$, let $A$ be $P_m \sqrt{\rho}P_m$ and $B$ be $\sqrt{\sigma_m}$, as $Tr{(\sqrt{\sigma_m}\sqrt{\sigma_m})}=1$, so we know for $0\leq m\leq M-1$,
\begin{equation}
    Tr{[(P_m \sqrt{\rho}P_m)^{\dagger}\sqrt{\sigma_m}]}\leq [Tr{((P_m \sqrt{\rho}P_m)^2)}]^{\frac{1}{2}},
\end{equation} with equality held if and only if $\sqrt{\sigma_m}=\frac{P_m \sqrt{\rho}P_m}{[Tr{((P_m \sqrt{\rho}P_m)^2)}]^{\frac{1}{2}}}$. Now we know that for $ \sigma=\oplus_m p_m\sigma_m$,
\begin{equation}
    Tr{(\sqrt{\rho}\sqrt{\sigma})}\leq \sum_m \sqrt{p_m}[Tr{((P_m \sqrt{\rho}P_m)^2)}]^{\frac{1}{2}},
\end{equation}for the rightside of the above inequality, using the H$\ddot{o}$lder inequality, we have
\begin{equation}
    \sum_m \sqrt{p_m}[Tr{((P_m \sqrt{\rho}P_m)^2)}]^{\frac{1}{2}}\leq [\sum_j Tr{((P_j \sqrt{\rho}P_j)^2)}]^{\frac{1}{2}},
\end{equation}with equality held if and only if $p_m=\frac{Tr{[(P_m \sqrt{\rho}P_m)^2]}}{\sum_n Tr{[(P_n \sqrt{\rho}P_n)^2]}}$, for $0\leq m\leq M-1$.

In conclusion, we know that for $\sigma\in \mathcal{I}$,
\begin{equation}
    Tr{(\sqrt{\rho}\sqrt{\sigma})}\leq [\sum_m Tr{[(P_m \sqrt{\rho}P_m)^2]}]^{\frac{1}{2}},
\end{equation}which turns into one equation if and only if 
\begin{equation*}
    \sigma=\oplus_m \frac{Tr{[(P_m \sqrt{\rho}P_m)^2]}}{\sum_n Tr{[(P_n \sqrt{\rho}P_n)^2]}}\frac{(P_m \sqrt{\rho}P_m)^2}{Tr{[(P_m \sqrt{\rho}P_m)^2]}}.
\end{equation*} That is, $\max_{\sigma\in \mathcal{I}} [Tr{(\sqrt{\rho}\sqrt{\sigma})}]^2=\sum_m Tr{[(P_m \sqrt{\rho}P_m)^2]}$. On the other hand, by definition, we have $\tilde{C}(\rho)=1-\sum_{m=0}^{M-1}Tr{[(P_m\sqrt{\rho}P_m)^2]}$. So
\begin{equation*}
    C_{\frac{1}{2}}(\rho)=1-\max_{\sigma\in \mathcal{I}} [Tr{(\sqrt{\rho}\sqrt{\sigma})}]^2=\min_{\sigma\in \mathcal{I}} d_a^{\frac{1}{2}}(\rho,\sigma).
\end{equation*}
\end{proof}
With this Lemma, it's obvious that $C_{\frac{1}{2}}(\rho)=0$ if and only if $\rho\in \mathcal{I}$. What's more, as $d_a ^{\frac{1}{2}}(\cdot ,\cdot)$ is contractive under quantum operations, the above Lemma guarantees that $C_{\frac{1}{2}}(\Phi(\rho))\leq C_{\frac{1}{2}}(\rho)$ for any incoherent quantum operation $\Phi$ and quantum state $\rho$. In particular, for arbitrary $U$ that can be decomposed as $ U=\bigoplus_m U_m$, where each $U_m$ is one arbitrary unitary operator on the subspace corresponding to $P_m$, we have
$C_{\frac{1}{2}}(U\rho U^{\dagger})=C_{\frac{1}{2}}(\rho)$.

Next, we will show that our $C_{\frac{1}{2}}(\Phi(\rho))$ satisfies the additivity condition.
\begin{lemma}
$C_{\frac{1}{2}}(\rho)$ satisfies the additivity condition in the following sense, for any decomposition $\mathcal{H}_1=\bigoplus_m P_m$ and $\mathcal{H}_2=\bigoplus_n Q_n$, 
    \begin{equation}
        C_{\frac{1}{2}}(p\rho\oplus (1-p)\sigma)=p C_{\frac{1}{2}}(\rho)+(1-p)C_{\frac{1}{2}}(\sigma)
    \end{equation} for states $\rho$ on $\mathcal{H}_1$ and $\sigma$ on $\mathcal{H}_2$.
\end{lemma}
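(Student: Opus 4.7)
The plan is to exploit the fact that, for states living on orthogonal subspaces, the square root of the weighted direct sum factors cleanly. Specifically, since $\mathcal{H}_1$ and $\mathcal{H}_2$ are orthogonal inside $\mathcal{H}_1\oplus\mathcal{H}_2$, I would first observe
\begin{equation*}
\sqrt{p\rho\oplus(1-p)\sigma}=\sqrt{p}\sqrt{\rho}\oplus\sqrt{1-p}\sqrt{\sigma},
\end{equation*}
because squaring the right-hand side reproduces $p\rho\oplus(1-p)\sigma$ and it is positive semidefinite.

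Next I would consider the full orthogonal decomposition of $\mathcal{H}_1\oplus\mathcal{H}_2$ associated with the combined state, namely $\{P_m\}_m\cup\{Q_n\}_n$. Because every $P_m$ lies entirely in $\mathcal{H}_1$ and annihilates any operator supported on $\mathcal{H}_2$ (and vice versa for $Q_n$), one immediately gets $P_m\sqrt{p\rho\oplus(1-p)\sigma}P_m=\sqrt{p}\,P_m\sqrt{\rho}P_m$ and an analogous identity for $Q_n$. Squaring and tracing, this yields
\begin{equation*}
\operatorname{Tr}\!\bigl[(P_m\sqrt{p\rho\oplus(1-p)\sigma}P_m)^2\bigr]=p\,\operatorname{Tr}\!\bigl[(P_m\sqrt{\rho}P_m)^2\bigr],
\end{equation*}
and similarly with weight $(1-p)$ for the $Q_n$-blocks.

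Summing over both families of projectors and inserting into the definition \eqref{coherence_sub} applied to the full decomposition $\{P_m\}\cup\{Q_n\}$, I would obtain
\begin{equation*}
C_{\frac{1}{2}}(p\rho\oplus(1-p)\sigma)=1-p\!\sum_m\!\operatorname{Tr}[(P_m\sqrt{\rho}P_m)^2]-(1-p)\!\sum_n\!\operatorname{Tr}[(Q_n\sqrt{\sigma}Q_n)^2].
\end{equation*}
Finally, splitting the constant $1$ as $p+(1-p)$ and regrouping produces $pC_{\frac{1}{2}}(\rho)+(1-p)C_{\frac{1}{2}}(\sigma)$, finishing the argument. There is no real obstacle beyond keeping the bookkeeping straight; the only step requiring a moment of care is the square-root identity, which hinges on the orthogonality of the two subspaces (so cross terms vanish and the direct sum really gives a positive semidefinite square root).
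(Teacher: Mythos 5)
Your proposal is correct and follows essentially the same route as the paper: both compute $C_{\frac{1}{2}}$ of the combined state using the projectors $P_m\oplus 0$ and $0\oplus Q_n$, use that the square root of the weighted direct sum is the direct sum of the weighted square roots, and then split $1=p+(1-p)$ to regroup. Your write-up is in fact slightly more careful than the paper's, which leaves the square-root identity implicit.
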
    
\begin{proof}
    \begin{align*}                                                                                       
        &C_{\frac{1}{2}}(p\rho\oplus (1-p)\sigma)=1-\sum_m Tr\{[(P_m\oplus0)\sqrt{p\rho\oplus (1-p)\sigma}(P_m\oplus0)]^2\}-\sum_n Tr\{[(0\oplus Q_n)\sqrt{p\rho\oplus (1-p)\sigma}(0\oplus Q_n)]^2\}\\
        &=1-p\sum_{m}Tr{[(P_m\sqrt{\rho}P_m)^2]}-(1-p)\sum_{n}Tr{[(Q_n\sqrt{\sigma}Q_n)^2]}=p C_{\frac{1}{2}}(\rho)+(1-p) C_{\frac{1}{2}}(\sigma).
    \end{align*}  
\end{proof}

Now, it's high time to show that our $C_{\frac{1}{2}}(\cdot)$ satisfies the order preserving condition
\begin{lemma} 
$C_{\frac{1}{2}}(\rho)$ satisfies the order preserving condition. That is, for two orthogonal decompositions $\mathrm{P}=\{P_m\}_{m=0}^{M-1}$ and $\mathrm{Q}=\{Q_n\}_{n=0}^{N-1}$ of $\mathcal{H}$, if $\mathrm{Q}\succeq \mathrm{P}$, then $C_{\frac{1}{2},\mathrm{Q}}(\rho)\geq C_{\frac{1}{2},\mathrm{P}}(\rho)$.
\end{lemma}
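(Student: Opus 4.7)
The plan is to reduce the inequality $C_{\frac{1}{2},\mathrm{P}}(\rho)\leq C_{\frac{1}{2},\mathrm{Q}}(\rho)$ to the opposite inequality at the level of the sums $\sum_m \mathrm{Tr}[(P_m\sqrt{\rho}P_m)^2]$, and then exploit the refinement structure $P_m=\sum_{n\in\Lambda_m}Q_n$, where $\{\Lambda_m\}_{m=0}^{M-1}$ is the partition of $\{0,\ldots,N-1\}$ guaranteed by $\mathrm{Q}\succeq\mathrm{P}$. Writing $A=\sqrt{\rho}$ for brevity, the task is to show
\begin{equation*}
\sum_{m=0}^{M-1}\mathrm{Tr}[(P_m A P_m)^2]\ \geq\ \sum_{n=0}^{N-1}\mathrm{Tr}[(Q_n A Q_n)^2],
\end{equation*}
after which subtraction from $1$ yields the desired conclusion.

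First I would use $P_m^2=P_m$ and cyclicity to rewrite $\mathrm{Tr}[(P_m A P_m)^2]=\mathrm{Tr}[P_m A P_m A]$, and analogously for $Q_n$. Substituting $P_m=\sum_{n\in\Lambda_m}Q_n$ expands this into
\begin{equation*}
\mathrm{Tr}[P_m A P_m A]=\sum_{n,n'\in\Lambda_m}\mathrm{Tr}[Q_n A Q_{n'} A],
\end{equation*}
so summing over $m$ yields $\sum_m\sum_{n,n'\in\Lambda_m}\mathrm{Tr}[Q_n A Q_{n'} A]$. The diagonal contribution $n=n'$ collects exactly to $\sum_n \mathrm{Tr}[(Q_n A Q_n)^2]$, so the crucial step is to show that every off-diagonal term with $n\neq n'$ (both in the same $\Lambda_m$) is nonnegative.

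The main technical step is to rewrite each off-diagonal term as a squared Hilbert-Schmidt norm. Using $Q_{n'}^2=Q_{n'}$, cyclicity, and hermiticity of $A$, one has
\begin{equation*}
\mathrm{Tr}[Q_n A Q_{n'} A]=\mathrm{Tr}[(Q_n A Q_{n'})(Q_{n'} A Q_n)]=\|Q_n\sqrt{\rho}\,Q_{n'}\|_2^{\,2}\ \geq 0,
\end{equation*}
where the middle equality comes from expanding the second factor of $A$ via $A=\sum_{k,k'}Q_k A Q_{k'}$ and noting that only the block $Q_{n'}AQ_n$ gives a nonzero trace. This identification is the only nontrivial part; once it is in hand the rest is bookkeeping.

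Putting the pieces together gives
\begin{equation*}
\sum_m \mathrm{Tr}[(P_m A P_m)^2]-\sum_n \mathrm{Tr}[(Q_n A Q_n)^2]=\sum_m\sum_{\substack{n,n'\in\Lambda_m\\ n\neq n'}}\|Q_n\sqrt{\rho}\,Q_{n'}\|_2^{\,2}\ \geq 0,
\end{equation*}
and subtracting both sides from $1$ yields $C_{\frac{1}{2},\mathrm{P}}(\rho)\leq C_{\frac{1}{2},\mathrm{Q}}(\rho)$. The expected obstacle is purely algebraic, namely the block manipulation that turns the off-diagonal cross terms into manifestly positive squared norms; no analytic or variational input is needed, and the argument works uniformly for all states $\rho$.
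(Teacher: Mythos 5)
Your proposal is correct and follows essentially the same route as the paper: both reduce the claim to $\sum_m \mathrm{Tr}[(P_m\sqrt{\rho}P_m)^2]\geq \sum_n \mathrm{Tr}[(Q_n\sqrt{\rho}Q_n)^2]$, expand each $P_m=\sum_{n\in\Lambda_m}Q_n$, and identify the off-diagonal cross terms as nonnegative Hilbert--Schmidt norms $\|Q_n\sqrt{\rho}\,Q_{n'}\|_2^2$. Your write-up is, if anything, slightly more explicit about the cyclicity and block-orthogonality steps than the paper's, which illustrates the case $P_0=Q_0+Q_1$ and then asserts the general case by analogy.
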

\begin{proof}
    Firstly, by definition, for these two orthogonal decompositions $\mathrm{P}=\{P_m\}_{m=0}^{M-1}$ and $\mathrm{Q}=\{Q_n\}_{n=0}^{N-1}$, we have  $C_{\frac{1}{2},\mathrm{P}}(\rho)=1-\sum_{m=0}^{M-1}Tr{[(P_m\sqrt{\rho}P_m)^2]}$ and $C_{\frac{1}{2},\mathrm{Q}}(\rho)=1-\sum_{n=0}^{N-1}Tr{[(Q_n\sqrt{\rho}Q_n)^2]}$.
    
    Because $\mathrm{Q}\succeq \mathrm{P}$, without losing generality, we assume that $P_0=Q_0+Q_1$. Then for $Tr{[(P_0\sqrt{\rho}P_0)^2]}$, it equals to 
    \begin{equation*}
        \sum_{i=0}^1 Tr{[(Q_i\sqrt{\rho}Q_i)^2]}+\sum_{0\leq i\neq j\leq 1}Tr{[(Q_i \sqrt{\rho}Q_j)^{\dagger}(Q_i \sqrt{\rho}Q_j)]},
    \end{equation*}which results in
    \begin{equation}
        Tr{[(P_0\sqrt{\rho}P_0)^2]}\geq \sum_{i=0}^1 Tr{[(Q_i\sqrt{\rho}Q_i)^2]}
        \label{bigger}
    \end{equation} for arbitrary $\rho$.
    Because $\mathrm{Q}\succeq \mathrm{P}$, so for every $0\leq m\leq M-1$, there is one subset $\Lambda_m\subseteq \{0,\ldots,N-1\}$ such that $P_m=\sum_{n\in \Lambda_m}Q_n$. Simliar as Eq.\eqref{bigger}, we have
    \begin{equation}
        Tr{[(P_m\sqrt{\rho}P_m)^2]}\geq \sum_{n\in \Lambda_m}^1 Tr{[(Q_n\sqrt{\rho}Q_n)^2]},
    \end{equation}which means that
    \begin{align*}
         \sum_{m=0}^{M-1}Tr{[(P_m\sqrt{\rho}P_m)^2]}&\geq \sum_{m=0}^{M-1}\sum_{n\in \Lambda_m}^1 Tr{[(Q_n\sqrt{\rho}Q_n)^2]}\\
         &=\sum_{n=0}^{N-1}Tr{[(Q_n\sqrt{\rho}Q_n)^2]}.
    \end{align*}
   So, for two orthogonal decompositions $\mathrm{P}=\{P_m\}_{m=0}^{M-1}$ and $\mathrm{Q}=\{Q_n\}_{n=0}^{N-1}$ of one Hilbert space, if $\mathrm{Q}\succeq \mathrm{P}$, then we have $C_{\frac{1}{2},\mathrm{Q}}(\rho)\geq C_{\frac{1}{2},\mathrm{P}}(\rho)$.
\end{proof}

Until now, $C_{\frac{1}{2}}(\cdot)$ is verified to be a good coherence measure in the projection setting. The following lemma will show which states are maximally coherent in terms of some orthogonal decomposition $\mathrm{P}=\{P_m\}_{m=0}^{M-1}$ of one Hilbert space $\mathcal{H}$.

\begin{lemma}
    Given one Hilbert space $\mathcal{H}$'s orthogonal decomposition $\mathrm{P}=\{P_m\}_{m=0}^{M-1}$, the sufficient and nesscessary condtion for $\ket{\psi}$ to be maximally coherent is $\ket{\psi}=\frac{1}{\sqrt{M}}\sum_m \ket{\psi_m}$, with $\ket{\psi_m}\in P_m(\mathcal{H}), \forall 0\leq m\leq M-1$. 
\end{lemma}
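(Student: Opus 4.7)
The plan is to reduce the maximization of $C_{\frac{1}{2}}$ over pure states to a simple inequality on a probability distribution. First, I would exploit the fact that $\sqrt{\rho_\psi}=\ket{\psi}\bra{\psi}$ to rewrite the defining sum. Setting $\ket{\phi_m}=P_m\ket{\psi}$, the operator $P_m\sqrt{\rho_\psi}P_m$ is the rank-one operator $\ket{\phi_m}\bra{\phi_m}$, so $(P_m\sqrt{\rho_\psi}P_m)^2=\|\phi_m\|^{2}\,\ket{\phi_m}\bra{\phi_m}$ and $\mathrm{Tr}[(P_m\sqrt{\rho_\psi}P_m)^2]=\|P_m\ket{\psi}\|^{4}=p_m^{2}$, where $p_m\doteq\bra{\psi}P_m\ket{\psi}$. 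Hence
$$C_{\frac{1}{2}}(\ket{\psi})=1-\sum_{m=0}^{M-1}p_m^{2},$$
with $(p_0,\ldots,p_{M-1})$ a probability distribution since $\sum_m P_m=I$.

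Next, I would maximize this expression over probability distributions. By the Cauchy-Schwarz inequality (or equivalently Jensen's inequality applied to $x\mapsto x^{2}$), one has $\sum_m p_m^{2}\geq\frac{1}{M}(\sum_m p_m)^{2}=\frac{1}{M}$, with equality if and only if $p_m=1/M$ for every $m$. Thus $C_{\frac{1}{2}}(\ket{\psi})\leq 1-1/M$, and this bound is saturated precisely when $\|P_m\ket{\psi}\|^{2}=1/M$ for all $m$. Writing $P_m\ket{\psi}=\frac{1}{\sqrt{M}}\ket{\psi_m}$ for unit vectors $\ket{\psi_m}\in P_m(\mathcal{H})$ and invoking the resolution of identity $\ket{\psi}=\sum_m P_m\ket{\psi}$ gives the claimed form $\ket{\psi}=\frac{1}{\sqrt{M}}\sum_m\ket{\psi_m}$. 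The converse direction is a one-line substitution: any such state has $\|P_m\ket{\psi}\|^{2}=1/M$ and therefore attains $C_{\frac{1}{2}}=1-1/M$.

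There is essentially no obstacle at the technical level; the only subtle point is ensuring that $1-1/M$ is the global maximum over \emph{all} states, not merely pure ones, so that the word ``maximally coherent'' is unambiguous. For this I would appeal to the convexity property of $C_{\frac{1}{2}}$, already listed among the verified requirements: writing any mixed $\rho=\sum_i q_i\rho_{\psi_i}$ yields $C_{\frac{1}{2}}(\rho)\leq\sum_i q_i C_{\frac{1}{2}}(\ket{\psi_i})\leq 1-1/M$, so mixed states cannot do better than the pure states identified above. This closes both directions and establishes the lemma.
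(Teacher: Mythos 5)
Your proof is correct and follows essentially the same route as the paper: both reduce the problem to minimizing $\sum_m \|P_m\ket{\psi}\|^4$ subject to $\sum_m \|P_m\ket{\psi}\|^2=1$, concluding equality holds exactly at the uniform distribution $\|P_m\ket{\psi}\|^2=1/M$. You are somewhat more explicit than the paper (naming the Cauchy--Schwarz step and adding the convexity remark to rule out mixed states exceeding $1-1/M$), but the underlying argument is the same.
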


\begin{proof}
    Given one pure state $\ket{\psi}$, from Eq.\eqref{coherence_sub}, its coherence about $\mathrm{P}=\{P_m\}_{m=0}^{M-1}$ is
    \begin{equation}
        C_{\frac{1}{2}}(\ket{\psi})=1-\sum_{m=0}^{M-1}||P_m\ket{\psi}||^4.
    \end{equation}So the statement that $\ket{\psi}$ is maximally coherent is equivalent to $\sum_{m=0}^{M-1}||P_m\ket{\psi}||^4$ being minimal. Note that for $\{||P_m\ket{\psi}||\}_m$, we have $\sum_{m=0}^{M-1}||P_m\ket{\psi}||^2=1.$ Based on these, we know that $\sum_{m=0}^{M-1}||P_m\ket{\psi}||^4$ being minimal if and only if  $||P_m\ket{\psi}||=\frac{1}{\sqrt{M}}, \forall 0\leq m\leq M-1$.
\end{proof}
 
{\it \textbf{Proof of Thm.\eqref{mixed}}---}
     For one degenerate Hamiltonian $H=\sum_{i=0}^{M-1}\lambda_i P_i$, given one quantum state $\rho$, its average quantum distance $\bar{S}_t (\rho)$ at some $t>0$ has the following relationship with the coherence of $\rho$ under $H$'s eigenspaces $\{P_i\}_{i=0}^{M-1}$ 
     \begin{equation}
       \bar{S}_t(\rho)=2[1-B(t)]C_{\frac{1}{2}}(\rho),  
     \label{main_result2} 
     \end{equation} where $B(t)\leq 1$ is one quantity independent of $\rho$ and $C_{\frac{1}{2}}(\rho)=1-\max\{[Tr{(\sqrt{\rho}\sqrt{\sigma})}]^2| \sigma=\sum_{i=0}^{M-1}P_i \sigma P_i\}$.

\begin{proof}
    Firstly, by direction computation, for $\bar{S}_t(\rho)$, we have
\begin{equation}
    \bar{S}_t(\rho)=2[1-\frac{1}{M!}\sum_{s\in S_M} Tr{(\sqrt{\rho}\sqrt{\rho_t(s)})}],
    \label{tildeS_t}
\end{equation}where $\rho_t (s)=e^{-iH_st}\rho e^{iH_st}$ as before. So in this case, the key is $\frac{1}{M!}\sum_{s\in S_M} Tr{(\sqrt{\rho}\sqrt{\rho_t(s)})}$.
    Similarly as previous non-degenerate cases, based on the definition of quantum average distance for degenerate Hamiltonians, for $\frac{1}{M!}\sum_{s\in S_M} Tr{(\sqrt{\rho}\sqrt{\rho_t(s)})}$, we find it equals to 
\begin{align*}
    &\sum_{m=0}^{M-1}Tr{(P_m \sqrt{\rho}P_m)^2}+\frac{2}{M!}\sum_{s\in S_M}\sum_{k<l}Tr{(P_{s(l)}\sqrt{\rho}P_{s(k)}\sqrt{\rho}P_{s(l)})}\cos{(\lambda_k -\lambda_l)t}.
\end{align*}Note that by exchanging the summation order of the second term, for $k<l$,  $\sum_{s\in S_M}Tr{(P_{s(l)}\sqrt{\rho}P_{s(k)}\sqrt{\rho}P_{s(l)})}$ is equivalent to 
\begin{equation}
    (M-2)!\sum_{m\neq n}Tr{[(P_{m}\sqrt{\rho}P_{n})^{\dagger}(P_{m}\sqrt{\rho}P_{n})]}.
    \label{Sum_2}
\end{equation}Compare Eq.\eqref{Sum_2} with the previous Eq.\eqref{Sum}, they are similar to each other. 

On the other hand, as $Tr{(\sqrt{\rho}\sqrt{\rho})}=Tr{(\rho)}=1$ and $\sqrt{\rho}=\sum_{m,n}P_m \sqrt{\rho}P_n$, so 
\begin{equation*}
    \sum_{m,n}Tr{[(P_{m}\sqrt{\rho}P_{n})^{\dagger}(P_{m}\sqrt{\rho}P_{n})]}=1.
\end{equation*}Combine these facts together, we will see that $\frac{1}{M!}\sum_{s\in S_M} Tr{(\sqrt{\rho}\sqrt{\rho_t(s)})}$ is equivalent to
\begin{align*}
    &(\sum_{m=0}^{M-1}Tr{[(P_m \sqrt{\rho}P_m)^2]})\cdot(1-\frac{2}{M(M-1)}\cdot \sum_{m<n}\cos{(\lambda_m -\lambda_n)t})+\frac{2}{M(M-1)}\sum_{m<n}\cos{(\lambda_m -\lambda_n)t}.
\end{align*} Let $B(t)$ be 
\begin{equation}
    B(t)\doteq \frac{2}{M(M-1)}\sum_{m<n}\cos{(\lambda_m -\lambda_n)t}
    \label{B_t}
\end{equation} and back to Eq.\eqref{tildeS_t}, we will get the following equation
\begin{equation*}
    \bar{S}_t (\rho)=2[1-B(t)]\cdot [1-\sum_{m=0}^{M-1}Tr{((P_m \sqrt{\rho}P_m)^2)}],
\end{equation*} where $1-\sum_{m=0}^{M-1}Tr{[(P_m \sqrt{\rho}P_m)^2]}$ is just the definition of $C_{\frac{1}{2}}(\rho)$ for the Hilbert space's orthogonal decomposition $\{P_i\}_{i=0}^{M-1}$. Until now, we have $ \bar{S}_t (\rho)=2[1-B(t)]\cdot C_{\frac{1}{2}}(\rho)$.
\end{proof}

{\it \textbf{Proof of Thm.\eqref{result_channel}}---}
Consider some open dynamics from $t=0$ to $t=T$, quantum mechanics shows that we can use one quantum channel $\Phi$, one completely positive and trace preserving map, to describe this process. That is $\rho_T = \Phi(\rho)$. On the other hand, with the assistance of the Stinespring representation \cite{Stinespring}, we can find one unitary operator on the composite system to use the following formula
\begin{equation}
    \Phi(\rho)= Tr_E (U\rho\otimes\ket{0}\bra{0}U^{\dagger})
    \label{stinespring}
\end{equation}
to re-express $\Phi(\rho)$ by introducing one appropriate environment system $\mathcal{H}_E$. Since $U$ is one unitary operator on the whole system, there must be one Hamiltonian $H$ on the whole system satisfying $U=e^{-iHT}$. This paves us one way to average the quantum channel $\Phi$. Suppose that $H=\sum_{i=0}^{M-1}\lambda_i P_i$ is the spectral decomposition of $H$, define 
\begin{equation}
    \Phi_s(\rho)=Tr_E (e^{-iH_s T}\rho\otimes\ket{0}\bra{0}e^{iH_s T})
    \label{channel_s}
\end{equation} for any permutation $s$ on $\{0,\ldots,M-1\}$. By Thm.\eqref{mixed}, in terms of $H$, the following is guaranteed
\begin{equation*}
    \bar{S}_t(\rho\otimes \ket{0}\bra{0})=2[1-B(t)]C_{\frac{1}{2}}(\rho\otimes \ket{0}\bra{0}).
\end{equation*}  In the above, $\bar{S}_t(\rho\otimes \ket{0}\bra{0})=\frac{1}{M!}\sum_{s\in S_M}D(e^{-iH_s T}\rho\otimes\ket{0}\bra{0}e^{iH_s T},\rho\otimes\ket{0}\bra{0})$. As the quantum Hellinger distance satisfies the data processing inequality \cite{Pitrik_2020, DPI_1,DPI_2,DPI_3}, about $\Phi_s (\cdot)$ we have
\begin{equation*}
    D(\Phi_s (\rho),\rho)\leq D(e^{-iH_s T}\rho\otimes\ket{0}\bra{0}e^{iH_s T},\rho\otimes\ket{0}\bra{0}).
\end{equation*}
Thus by averaging the quantum channel $\Phi$ in this way, we will finally arrive at the following result
\begin{align*}
    & \frac{1}{M!}\sum_{s\in S_M} D(\Phi_s (\rho),\rho)\leq \frac{1}{M!}\sum_{s\in S_M}D(e^{-iH_s T}\rho\otimes\ket{0}\bra{0}e^{iH_s T},\rho\otimes\ket{0}\bra{0})\\
    & =\bar{S}_t(\rho\otimes \ket{0}\bra{0})=2[1-B(t)]C_{\frac{1}{2}}(\rho\otimes \ket{0}\bra{0}),
\end{align*}which is the result in Thm.\eqref{result_channel}.
\\

{\it \textbf{When the equality can be achieved}---}

\textbf{The Problem}: For one non-unitary channel $\Phi(\cdot)$ on the system $\mathcal{H}_1$, Thm.\eqref{result_channel} shows that for arbitrary state $\rho$, the following inequality holds:
\begin{align}
    & \frac{1}{M!}\sum_{s\in S_M}D(\Phi_s(\rho),\rho)\leq \frac{1}{M!}\sum_{s\in S_M}D(e^{-iH_s t}\rho\otimes\ket{0}\bra{0}e^{iH_s t},\rho\otimes\ket{0}\bra{0})\\
    & =2[1-B(t)]C_{1/2}(\rho\otimes\ket{0}\bra{0}),
\end{align}, where $\Phi_s$ is defined by $\Phi_s(\rho)=tr_E(e^{-iH_s t}\rho\otimes\ket{0}\bra{0}e^{iH_s t})$. This result is highly dependent on the property of the quantum Hellinger distance. That is, the quantum Hellinger distance satisfies the data processing inequality \cite{Pitrik_2020,DPI_1,DPI_2,DPI_3}: for one quantum channel $\Lambda$,
\begin{equation}
    D(\Lambda(\rho),\Lambda(\tau))\leq D(\rho, \tau), \forall \rho, \tau.
\end{equation}

Thus, we can see that to make this inequality to be one equality is equivalent to show:
\begin{equation}
     D(\Phi_s(\rho),\rho)=D(e^{-iH_s t}\rho\otimes\ket{0}\bra{0}e^{iH_s t},\rho\otimes\ket{0}\bra{0}), \forall s\in S_M.
     \label{p_s}
\end{equation}

In following parts, we will see that nonunitary channels satisfying Eq.\eqref{p_s} should be quite rare in terms of theoretical analysis and direct computations.

\textbf{Theoretical Analysis}:
In this part, for one nonunitary channel $\Phi(\cdot)$ on $\mathcal{H}_1$, we will focus on whther the equation
\begin{equation}  D(\Phi(\rho),\rho)=D(U\rho\otimes\ket{0}\bra{0}U^{\dagger},\rho\otimes\ket{0}\bra{0})
\label{s_0}
\end{equation} can be satisfied, where $U$ is some Stinespring dilation of $\Phi(\cdot)$ acting on the composite system $\mathcal{H}_1\otimes\mathcal{H}_E$.

As $\Phi(\cdot)$ is nonunitary, its Stinespring dilation $U$ cannot be expressed as $U_1 \otimes U_2$. This means that these two subsystems $\mathcal{H}_1$ and $\mathcal{H}_E$ must really interact with each other under $U$.

On the other hand, distances between states can be regarded as one indicator how difficult to distinguish one state from the other. Thus, if 
\begin{equation*}
D(\Phi(\rho),\rho)=D(U\rho\otimes\ket{0}\bra{0}U^{\dagger},\rho\otimes\ket{0}\bra{0})
\end{equation*} holds, this means that to discriminate $U\rho\otimes\ket{0}\bra{0}U^{\dagger}$ from $\rho\otimes\ket{0}\bra{0}$, focusing on just the subsystem $\mathcal{H}_1$ is as difficult as focusing on the whole system $\mathcal{H}_1\otimes\mathcal{H}_E$.

But for arbitrary state $\tau$ on the composite system $\mathcal{H}_1\otimes\mathcal{H}_E$, roughly speaking, we can decomposite it into three parts: \textbf{its $\mathcal{H}_1$ part $tr_E(\tau)$, its $\mathcal{H}_E$ part $tr_1(\tau)$ and connections between these two subsystems}. As $U$ cannot be expressed as $U_1 \otimes U_2$, so in terms of information flow, we can conclude that $U$, acting on $\rho\otimes\ket{0}\bra{0})$, transfers information initially in $\mathcal{H}_1$ (information in $\rho$) to the subsystem $H_E$ and connections between these two systems. 

So, above all, as $\Phi(\cdot)$ is nonunitary, we can conclude that comparing $U\rho\otimes\ket{0}\bra{0}U^{\dagger}$ with $\rho\otimes\ket{0}\bra{0}$, there must be differences on their $\mathcal{H}_E$ parts and connections between these two subsystems. So just discriminating them from the subsystem $\mathcal{H}_1$, it means that we ignore their other differences on the composite system $\mathcal{H}_1\otimes\mathcal{H}_E$. And our analysis shows that there are differences besides their $\mathcal{H}_1$ part. So in this scenario, generally speaking,  to discriminate $U\rho\otimes\ket{0}\bra{0}U^{\dagger}$ from $\rho\otimes\ket{0}\bra{0}$, focusing on the whole system $\mathcal{H}_1\otimes\mathcal{H}_E$ will be easier than focusing on just the subsystem $\mathcal{H}_1$, which means that the following inequality should be true:
\begin{equation*}
    D(U\rho\otimes\ket{0}\bra{0}U^{\dagger},\rho\otimes\ket{0}\bra{0})>D(\Phi(\rho),\rho).
\end{equation*} The only exception is when we can discriminate these two states with certainty by  focusing on just the subsystem $\mathcal{H}_1$. This means that $tr[tr_E(U\rho\otimes\ket{0}\bra{0}U^{\dagger})tr_E(\rho\otimes\ket{0}\bra{0})]=0$. Even though we can such nonunitary channel $\Phi(\cdot)$ and the corresponding Stinespring dilation $U$, however comparing Eq.\eqref{s_0} with Eq.\eqref{p_s}, this only means that we just solved the case corresponding to $s=I$. If we consider all permutations $s\in S_M$ as required in Eq.\eqref{p_s}, it will make this pursue almost impossible, which will be shown thoroughly in the forthcomming part.

\textbf{Some Computation}: In this part, we will still focus on the equation
\begin{equation*}  D(\Phi(\rho),\rho)=D(U\rho\otimes\ket{0}\bra{0}U^{\dagger},\rho\otimes\ket{0}\bra{0}).
\end{equation*} Suppose that there is one state $\rho$ satisfying this equation. Assume that $\rho=\ket{\mu_0}\bra{\mu_0}$, then the initial state on $\mathcal{H}_1\otimes\mathcal{H}_E$ is $\ket{\mu_0,0}$. And based on $\ket{\mu_0}$, we can get one orthnormal basis $\{\mu_i\}_i$ for $\mathcal{H}_1$.

Suppose that $U\ket{\mu_0,0}=\sum_{i,j}\alpha_{ij}\ket{\mu_i,j}$. As $\Phi$ is nonunitary, then $\Phi(\ket{\mu_0}\bra{\mu_0})$ should be mixed, which means that there should be one $j_0 \neq 0$ at least such that for some $i_0$, $\alpha_{i_0j_0}\neq 0$. By computation, we know that $D(\ket{\mu_0,0},U\ket{\mu_0,0})=2(1-|\alpha_{00}|^2)$.

On the other hand, $tr_E(U\ket{\mu_0,0}\bra{\mu_0,0}U^{\dagger})=\sum_{k,l}(\sum_j \alpha_{kj}\bar{\alpha}_{lj})\ket{\mu_k}\bra{\mu_l}=\Phi(\ket{\mu_0}\bra{\mu_0})$. As $\Phi$ is nonunitary, $\Phi(\ket{\mu_0}\bra{\mu_0})$ should be mixed. Thus $\sqrt{\Phi(\ket{\mu_0}\bra{\mu_0})}>\Phi(\ket{\mu_0}\bra{\mu_0})$. Because $D(\Phi(\ket{\mu_0}\bra{\mu_0}),\ket{\mu_0}\bra{\mu_0})=2(1-\bra{\mu_0}\sqrt{\Phi(\ket{\mu_0}\bra{\mu_0})}\ket{\mu_0})$ and $\bra{\mu_0}\sqrt{\Phi(\ket{\mu_0}\bra{\mu_0})}\ket{\mu_0}\geq \bra{\mu_0}\Phi(\ket{\mu_0}\bra{\mu_0})\ket{\mu_0}=\sum_j |\alpha_{0j}|^2$, so the following inequality is true:
\begin{align}
   & D(\Phi(\ket{\mu_0}\bra{\mu_0}),\ket{\mu_0}\bra{\mu_0})\leq 2(1-\bra{\mu_0}\Phi(\ket{\mu_0}\bra{\mu_0})\ket{\mu_0}) \\
   & =2(1-\sum_j |\alpha_{0j}|^2)\leq 2(1-|\alpha_{00}|^2)=D(U\ket{\mu_0,0},\ket{\mu_0,0}).
\end{align} So if $D(\Phi(\ket{\mu_0}\bra{\mu_0}),\ket{\mu_0}\bra{\mu_0})=D(U\ket{\mu_0,0},\ket{\mu_0,0})$ holds, it is equivalent to these two requirements:
\begin{equation} \bra{\mu_0}\sqrt{\Phi(\ket{\mu_0}\bra{\mu_0})}\ket{\mu_0}=\bra{\mu_0}\Phi(\ket{\mu_0}\bra{\mu_0})\ket{\mu_0}, \sum_j |\alpha_{0j}|^2=|\alpha_{00}|^2.
\end{equation}
But $\Phi(\ket{\mu_0}\bra{\mu_0})$ is mixed because of $\Phi$'s nonunitarity, the condition "$\bra{\mu_0}\sqrt{\Phi(\ket{\mu_0}\bra{\mu_0})}\ket{\mu_0}=\bra{\mu_0}\Phi(\ket{\mu_0}\bra{\mu_0})\ket{\mu_0}$" is equivalent to $\ket{\mu_0}\in ker(\Phi(\ket{\mu_0}\bra{\mu_0}))$, that is
\begin{equation*}
    \sum_j \alpha_{kj}\bar{\alpha}_{0j}=0, \forall k.
\end{equation*}In particular, we will have $\sum_{j}|\alpha_{0j}|^2=0$, which can also be expressed as $\alpha_{0j}=0, \forall j$. This just means $\bra{\mu_0}\Phi(\ket{\mu_0}\bra{\mu_0})\ket{\mu_0}=0$. And the latter condition "$\sum_j |\alpha_{0j}|^2=|\alpha_{00}|^2$" is equivalent to $\alpha_{0j}=0, \forall j\neq 0$.

Above all, we see that for initial states like $\ket{\mu_0,0}$, to make $D(\Phi(\ket{\mu_0}\bra{\mu_0}),\ket{\mu_0}\bra{\mu_0})=D(U\ket{\mu_0,0},\ket{\mu_0,0})$ to be true, it is necessary and sufficient to require $\bra{\mu_0}\Phi(\ket{\mu_0}\bra{\mu_0})\ket{\mu_0}=0$. And the latter condition means that for the nonunitary channel$\Phi(\cdot)$, the initial state $\ket{\mu_0}$ should be orthogonal to the final state $\Phi(\ket{\mu_0}\bra{\mu_0})$. Thus,  it is easy to conclude that there is no quantum channel on $\mathbf{C}^2$ satisfying this requirement. This is because for any pure state $\ket{\mu}$ on $\mathbf{C}^2$, the state orthogonal to it should be another pure state and cannot be a mixed state, which will be contrary to "$\Phi(\cdot)$ is nonunitary".  And if we consider one larger Hilbert space $\mathbf{C}^3$, we can construct the following quantum channel $\Phi(\cdot)$ to satisfy the equation in Eq.\eqref{s_0}. Considering the computational basis of $\mathbf{C}^3$, the Kraus operators of $\Phi$ are
\begin{equation*}
    A_0=\frac{1}{\sqrt{2}}\ket{1}\bra{0},A_1=\frac{1}{\sqrt{2}}\ket{2}\bra{0},A_3=\ket{1}\bra{1}, A_4=\ket{2}\bra{2}.
\end{equation*}With this channel, let $\ket{\mu_0}$ be $\ket{0}$, then we will see that in this case $\bra{\mu_0}\Phi(\ket{\mu_0}\bra{\mu_0})\ket{\mu_0}=0$. That is, in this setting, the equation in Eq.\eqref{s_0} is satisfied. However, just as dilemma described in the previous part, this equation only corresponds to $s=I$. And it is clear that there will be nontrivial permuation $s$ which makes $\bra{\mu_0}\Phi_s(\ket{\mu_0}\bra{\mu_0})\ket{\mu_0}\neq 0$. And this will result in the inequality in Thm.\eqref{result_channel} to be one strict inequality. And similar cases will be encountered in larger Hilbert spaces. 

In conclusion, it will be very difficult to find a nonunitary channel $\Phi$ such that the inequality in Thm.\eqref{result_channel} can be satisfied.

{\it \textbf{Instantaneous Speed for Degenerate Hamiltonians}---}
Suppose that at the intant $t$, $H_t=\sum_{m=0}^{M-1}\lambda_m(t)P_m(t)$ is degenerate. If the initial state is pure and the evolution is unitary, then the state at $t$ is still pure, which is assumed to be $\ket{\psi_t}$. To deal with this degenerate case, the key insight is to express $\ket{\psi_t}$ as $\ket{\psi_t}=\sum_m \ket{\psi_{t,m}}$, where $\ket{\psi_{t,m}}=P_m\ket{\psi_t}, l_m\doteq||\ket{\psi_{t,m}}||\leq 1, \forall 0\leq m\leq M-1$. Then using the same tricks used in the main text, about the quantity $D(\rho_{\psi_t},\rho_{\psi_{t+\delta t}})$, we will have
\begin{equation}
  D(\rho_{\psi_t},\rho_{\psi_{t+\delta t}})\approx \delta^2 t(\Vec{r}_{\psi_t}, B_t \Vec{r}_{\psi_t}),
\end{equation}where $\Vec{r}_{\psi_t}=(l^2_0,\ldots,l^2_{M-1})^T$ and $B_t=((\lambda_m -\lambda_n)^2)_{0\leq m,n\leq M-1}$ is one matrix of the order $M\times M$. These are all similar to what we have done for non-degenerate cases in the main text.  With this observation, following procedures in the main text, we can still establish 
\begin{equation}
     v(t)=\sqrt{(\Vec{r}_{\psi_t}, B_t \Vec{r}_{\psi_t})}
\end{equation} and other results for degenerate cases.

{\it \textbf{Average extracted work for quantum batteries}---}
Consider the following model as stated in the main text,
\begin{equation}
    H(t)=\epsilon \ket{1}\bra{1}+\eta(t)V(t),
    \label{single_battery}
\end{equation}where $\epsilon>0$, $\eta(t)$ is a smooth function satisfying $\eta(0)=\eta(\tau)=0$ and $\eta(t)>0$ for $t\in (0,\tau)$, and $V(t)$ represents the time-dependent spin operator. Suppose the intial state is $\rho_0 =\ket{\psi_0}\bra{\psi_0}$. Then at the instant $t$, we can assume that the instantaneous state is $\rho_t=\ket{\psi_t}\bra{\psi_t}$. Traditionally, the extracted work from $t$ to $t+\Delta t$ is
\begin{equation*}
    W_t=Tr{[\epsilon\ket{1}\bra{1}(\rho_t-\rho_{t+\Delta t})]},
\end{equation*}where $\rho_{t+\Delta t}=\ket{\psi_{t+\Delta t}}\bra{\psi_{t+\Delta t}}=e^{-i\int_t ^{t+\Delta t}H(s)ds}\rho_te^{-i\int_t ^{t+\Delta t}H(s)ds}\approx e^{-iH(t)\Delta t}\rho_t e^{iH(t)\Delta t}$ as $\Delta t<<1$. By applying our average strategy to the discharging part $\eta(t)V(t)$, we have to gain all its variants by permatutating its eigenstates with eigenvalues fixed. Because $V(t)$ is some spin operator, so we know that all its variants are $\eta(t)V(t)$ and $-\eta(t)V(t)$. Secondly, we should consider following evolutions from $t$ to $t+\Delta t$,
\begin{equation}
    H(t)=\epsilon \ket{1}\bra{1}+\eta(t)V(t), \ H'(t)=\epsilon \ket{1}\bra{1}-\eta(t)V(t),
    \label{permutation}
\end{equation}the first is just $H(t)$ itself, the other is obtained from $H(t)$ by using $\eta(t)V(t)$'s nontrivial permutation instead of  $\eta(t)V(t)$. Using the following approximation
\begin{equation*}
    \rho_{t+\Delta t,1}=e^{-i\int_t ^{t+\Delta t}H(s)ds}\rho_te^{-i\int_t ^{t+\Delta t}H(s)ds}\approx e^{-iH(t)\Delta t}\rho_t e^{iH(t)\Delta t}, \ \rho_{t+\Delta t,2}=e^{-i\int_t ^{t+\Delta t}H'(s)ds}\rho_te^{-i\int_t ^{t+\Delta t}H'(s)ds}\approx e^{-iH'(t)\Delta t}\rho_t e^{iH'(t)\Delta t}
\end{equation*}, we can now consider the following quantity
\begin{equation}
    \bar{W}_t=\frac{1}{2}Tr\{[\epsilon \ket{1}\bra{1}(\rho_t-\rho_{t+\Delta t,1})+\epsilon \ket{1}\bra{1}(\rho_t-\rho_{t+\Delta t,2})]\},
    \label{average_work}
\end{equation}which is the average extracted work used in the main text. For Eq.\ref{average_work}, we can use the interaction picture to calculate it as 
\begin{equation}
     \bar{W}_t=\frac{1}{2}Tr{[H_0 ^I(\rho_t^I-\rho_{t+\Delta t,1}^I)+H_0 ^I(\rho_t^I-\rho_{t+\Delta t,2}^I)]},
     \label{interaction_picture}
\end{equation} where $H_0 ^I=e^{it\cdot \epsilon\ket{1}\bra{1}}\epsilon\ket{1}\bra{1}e^{-it\cdot \epsilon\ket{1}\bra{1}}=\epsilon\ket{1}\bra{1}$, $\rho_t^I=e^{it\cdot \epsilon\ket{1}\bra{1}}\rho_t e^{-it\cdot \epsilon\ket{1}\bra{1}}$. Besides, in Eq.\ref{interaction_picture}, we have
\begin{equation*}
\rho_{t+\Delta t,1}^I\approx e^{-i\Delta t\cdot\eta(t)V^I(t)}\rho_t^I e^{i\Delta t\cdot\eta(t)V^I(t)},\ \rho_{t+\Delta t,2}^I\approx e^{i\Delta t\cdot\eta(t)V^I(t)}\rho_t^I e^{-i\Delta t\cdot\eta(t)V^I(t)},
\end{equation*}where $V^I(t)=e^{it\cdot \epsilon\ket{1}\bra{1}}V(t) e^{-it\cdot \epsilon\ket{1}\bra{1}}$ is still some spin operator.
Taking all these into Eq.\ref{interaction_picture}, combined with the H$\ddot{o}$lder inequality, we will have
\begin{align*}
    &|\bar{W}_t|=|\frac{1}{2}Tr{[H_0 ^I(\rho_t^I-\rho_{t+\Delta t,1}^I)+H_0 ^I(\rho_t^I-\rho_{t+\Delta t,2}^I)]}|\\
    &\leq ||\epsilon\ket{1}\bra{1}||_2\cdot ||\frac{1}{2}(\rho_t^I-\rho_{t+\Delta t,1}^I)+\frac{1}{2}(\rho_t^I-\rho_{t+\Delta t,2}^I)||_2\\
    &\leq \epsilon \sum_{i=1,2} \frac{1}{2}||\rho_t^I-\rho_{t+\Delta t,i}^I||_2.
\end{align*}As $\rho_t^I=e^{it\cdot \epsilon\ket{1}\bra{1}}\rho_t e^{-it\cdot \epsilon\ket{1}\bra{1}}$ is a pure state and $V^I(t)$ is some spin operator, so from Eq.\eqref{True} in the main text, we know that 
\begin{equation}
    D(\rho_t^I,\rho_{t+\Delta t,i}^I)=2C_{\frac{1}{2},V^I(t)}(\rho_t^I)(1-\cos{(2\eta(t)\Delta t)}), \forall i\in {1,2}
\end{equation}where $C_{\frac{1}{2},V^I(t)}(\rho_t^I)$ is the coherence of $\rho_t$ in the eigenstates of $V^I(t)$. Because for pure states $\gamma$ and $\xi$, we have $D(\gamma,\xi)=||\gamma-\xi||_2^2$, so above all, we will have
\begin{equation}
    |\bar{W}_t|\leq \epsilon \sum_{i=1,2} \frac{1}{2}||\rho_t^I-\rho_{t+\Delta t,i}^I||_2=\epsilon\sqrt{4C_{\frac{1}{2},V^I(t)}(\rho_t^I)\sin^2{(\eta(t)\Delta t)}}=2\epsilon\cdot \sin{(\eta(t)\Delta t)}\sqrt{C_{\frac{1}{2},V^I(t)}(\rho_t^I)}.
\end{equation}On the other hand, because $V^I(t)=e^{it\cdot \epsilon\ket{1}\bra{1}}V(t) e^{-it\cdot \epsilon\ket{1}\bra{1}}$ and $\rho_t^I=e^{it\cdot \epsilon\ket{1}\bra{1}}\rho_t e^{-it\cdot \epsilon\ket{1}\bra{1}}$, it is obvious to show that $C_{\frac{1}{2},V^I(t)}(\rho_t^I)=C_{\frac{1}{2},V(t)}(\rho_t)$ by the definition of $C_{\frac{1}{2}}(\cdot)$. In conclusion, we will obtain the following result
\begin{equation}
     |\bar{W}_t|\leq 2\epsilon\cdot \sin{(\eta(t)\Delta t)}\sqrt{C_{\frac{1}{2},V(t)}(\rho_t)},
     \label{estimation}
\end{equation} which is the result stated in the main text.  Using the qubit system as the example, one benefit is the simplicity of qubits, and the other benefit is that spin operators are one natural Hamiltonian set where all members have the same eigenvalus. For general qudit systems, consider the general discharging model
\begin{equation*}
    H(t)=H_0 +V(t),
\end{equation*}where $V(0)=V(\tau)=0$. If we the constraint that eigenvalues of $V(t)$ at any instant $t$ are fixed, maybe eigenvalues of $V(t)$ at two instants $t_1$ and $t_2$ can be different, using similar deductions shown above,  bounds as stated in Eq.\ref{estimation} can also be established int qudit systems. So in terms of this average extracted work, we can say it is better to choose sucn $V(t)$ which makes $\rho_t$ more coherent in its eigenstates.
\end{widetext}

\end{document}